\begin{document}
\setlength{\topskip}{-3pt}

\newtheorem{lemma}{Lemma}
\newtheorem{proposition}{Proposition}
\newtheorem{remark}{Remark}

\title{\huge Joint Sensing and Communication-Rate Control for Energy Efficient Mobile Crowd Sensing }
\author{Ziqin Zhou, \emph{Student Member, IEEE}, Xiaoyang Li, \emph{Member, IEEE}, \\ Changsheng You, \emph{Member, IEEE}, Kaibing Huang, \emph{Fellow, IEEE}, and Yi Gong, \emph{Senior Member, IEEE}
\thanks{Ziqin Zhou, Xiaoyang Li, Changsheng You and Yi Gong are with the Department of Electrical and Electronic Engineering (EEE), Southern University of Science and Technology (SUSTech), Shenzhen, China. Kaibin Huang is with the Department of EEE, The University of Hong Kong (HKU), Hong Kong. Corresponding author: Xiaoyang Li (e-mail: lixy@sustech.edu.cn)}
}
\maketitle

\vspace{-15mm}
\begin{abstract}
Driven by the rapid growth of Internet of Things applications, tremendous data need to be collected by sensors and uploaded to the servers for further process. As a promising solution, mobile crowd sensing enables controllable sensing and transmission processes for multiple types of data in a single device. In this paper, a typical user is considered that is required to sense and transmit data to a server, while it is assumed to remain busy and incapable of sensing data during an interval.  An optimization problem is formulated to minimize the energy consumption of data sensing and transmission by controlling the sensing and transmission rates over time, subject to the constraints on the sensing data sizes, transmission data sizes, data casualty, and sensing busy time. This problem is highly challenging, due to the coupling between the rates as well as the existence of the busy time. To deal with this problem, we first show that it can be equivalently decomposed into two subproblems, corresponding to a search for the amount of data size that needs to be sensed before the busy time (referred to as the height), as well as the sensing and transmission rate control given the height. Next, we show that the latter problem can be efficiently solved by using the classical string-pulling method, while an efficient algorithm is proposed to progressively find the optimal height without the exhaustive search. Moreover, the solution approach is extended to a more complex scenario where there is a finite-size buffer at the server for receiving data. Last, simulations are conducted to evaluate the performance of the proposed design.
\end{abstract}

\begin{IEEEkeywords}
Joint sensing and communication rates control, differentiated radio resource management, mobile crowd sensing, energy efficiency, string-pulling structure
\end{IEEEkeywords}

\section{Introduction}
The development of \emph{Internet of Things} (IoT) applications for Smart Cities has enabled a series of services such as auto-driving, pollution assessment, temperature measurement, and public-safety surveillance \cite{chiang2016fog}. Nonetheless, the wide range of services require tremendous data with a variety of types, which need to be collected by the sensors and sent to the servers \cite{ma2014opportunities}. In the conventional \emph{wireless sensor networks} (WSN), it is hard for a sensor to collect diverse types of data due to its limited sensing coverage and scalability \cite{akyildiz2002wireless}. To overcome such bottleneck, \emph{mobile crowd sensing} (MCS) has been proposed by leveraging handheld and wearable IoT devices equipped with multiple sensing modules to collect different types of data \cite{ganti2011mobile}. 

Despite the multiple types of collected data at the sensors, delivering them to the server simultaneously will result in a heavy communication burden if not impossible. Fortunately, the different task requirements with respect to both data size and delay tolerance can be exploited for efficient radio resource management, which is known as the \emph{differentiated radio resource management} (DRRM) \cite{carpenter2002differentiated}. As an intelligent communication method, DRRM integrates different task requirements (such as delay tolerance) and wireless link conditions (such as channel state) to design device access, resource allocation and interference coordination \cite{zhang2014providing}. Different from the traditional communication schemes, DRRM not only focuses on spectrum efficiency, but also on the \emph{quality of service} (QoS) for the sensing tasks \cite{tao2008resource}. For instance, auto-driving needs \emph{ultra reliable low latency communication} (URLLC) of sensed data to ensure safety \cite{popovski20185g}. In contrast, pollution assessment requires \emph{enhanced mobile broadband} (eMBB) for carrying massive sensed data, but is less sensitive to transmission delay \cite{alsenwi2019embb}. In essence, DRRM is an integration between the application layer and the physical layer, i.e., the information of the application layer is used to guide the resource allocation in the physical layer. Based on DRRM, the controllable data sensing and transmission processes of IoT devices can be utilized to improve the efficiency of MCS \cite{li2018wirelessly}.

The investigation of DRRM can be traced back to the earlier works on the long-term rate control, where multiple tasks exist in the same server with different requirements of data size and delay tolerance \cite{prabhakar2001energy,zafer2005calculus,zafer2008optimal,zafer2009calculus,wang2013energy}. It was found in \cite{prabhakar2001energy} that varying packet transmission time can reduce the energy consumption for transmitting given a fixed amount of data. Inspired by such findings, the optimal rate-control policy was proved to have the \emph{string-pulling} (SP) structure for minimizing the total transmission energy while satisfying the QoS constraints of tasks \cite{zafer2005calculus}. In essence, the optimality of SP structure is due to the convexity of the objective function in the optimization problem. The algorithm for deriving the SP structure was further extended to account for the data transmission cases with bursty data arrival \cite{zafer2008optimal}, finite data buffer \cite{zafer2009calculus}, and time-varying channels \cite{wang2013energy}, respectively.

The optimality of rate control with SP structure was further extended to DRRM in other scenarios including \emph{energy-harvesting} (EH) \cite{yang2011optimal,tutuncuoglu2012optimum,devillers2012general,gurakan2013energy,ozel2013optimal,wang2014optimal,ulukus2015energy}, \emph{channel estimation} (CE) \cite{luo2012training}, \emph{relaying} \cite{huang2012throughput}, \emph{caching} \cite{gregori2016wireless}, \emph{mobile edge computing} (MEC) \cite{you2018exploiting}, and \emph{edge learning} (EL) \cite{li2021data}. In EH systems, the power control policy with SP structure was proved to be optimal for transmission delay minimization given the profiles of energy arrivals \cite{yang2011optimal}, battery capacity \cite{tutuncuoglu2012optimum}, battery leakage \cite{devillers2012general}, energy cooperation \cite{gurakan2013energy}, channel states \cite{ozel2013optimal}, circuit power consumption \cite{wang2014optimal}, and user states \cite{ulukus2015energy}. As for CE, the joint training period and power control based on SP structure was proved to be optimal for minimizing the channel estimation error \cite{luo2012training}. Taking relay into consideration, the SP structure based joint source and relay power allocation over time was designed for maximizing the throughput \cite{huang2012throughput}. As for caching, the joint design of the transmission and caching policies based on SP structure was carried out for minimizing the traffic and energy costs over the backhaul links \cite{gregori2016wireless}. In MEC systems, the offloading data size with the SP structure over the whole computing duration was proved to be optimal for energy consumption minimization given the \emph{central processing unit} (CPU) state information \cite{you2018exploiting}. As for EL, the jointly data partition and transmission rate design based on SP structure was proposed for minimizing the transmission energy consumption as well as the classification errors of multiple learning tasks including support vector machines and convolutional neural networks \cite{li2021data}. Despite the rich literatures on SP structure based DRRM, all of them only focus on data transmission process without taking the data sensing process into consideration.

By relaxing the simple assumption of one-shot or bursty data arrivals in literatures \cite{you2018exploiting,li2021data}, the data sensing process of MCS device is controllable \cite{lane2010survey}. Therefore, the sensing rate can be optimized together with the transmission rate to minimize the total energy consumption for data sensing and transmission. Moreover, as the sensing modules of MCS device might be occupied by other applications (e.g., the camera is occupied when the device holder is in video call), the sensing process need to be suspended in such duration, which is known as the \emph{busy time interval} \cite{khan2012mobile}. During such an interval, the sensing rate has to be zero, while the transmission module can work as normal. However, as the data need to be sensed before transmission, the sensing process and transmission process naturally interact with each other. The interdependence of sensing and transmission processes together with the existence of busy time result in coupling optimization variables and nontrivial problem. By exploiting the SP structure as well as taking geometrical analysis in this paper, the optimal sensing and transmission rate controls are derived for the scenarios with finite or infinite data buffer capacity.

The main contributions of this work are summarized below.

\begin{itemize}
\item \emph{Optimal Sensing and Transmission Rate Control given Fixed Height:} Given the requirements of tasks w.r.t. data size and delay tolerance, the data sensing and transmission rate are jointly optimized to minimize the total energy consumption of sensing and transmission. Due to the fact the the data need to be sensed before transmission, the sensing and transmission rates are coupled together, which makes the optimization problem non-trivial. The existence of busy time interval in practice further complicates the problem. To obtain the tractable solution, a vital concept namely height is introduced in this paper, which specifies the amount of data that needs to be sensed before the busy time. Given the fixed height, the optimal sensing and transmission rates controlling policies can be obtained by using the SP method. 
     
\item \emph{Optimal Design with Infinite Server Data Buffer Capacity:} For a server with infinite data buffer capacity, the original rates optimization problem can be converted to a search for the optimal height. Based on the property of the objective function, the upper and lower bounds of the area where the optimal height lies in are derived. The whole searching area is further divided into a series of sub-areas due to the different expressions of the objective function with respect to different heights.. Finally, the optimal height in each sub-area is obtained based on the convexity of the objective function and the global optimal height is further determined by comparing the local optimums.

\item \emph{Optimal Design with Finite Server Data Buffer Capacity:} 
The versatility of the above solving approaches is further demonstrated by an extension to the case with finite data buffer capacity of the server. Specifically, the transmitted data size is upper bounded by the data buffer capacity and thus the algorithm is adjusted correspondingly, while the SP structure also holds in the optimal design for this case.  
\end{itemize}
    
The rest of this paper is organized as follows. The system model and problem formulation are described in Section II. The optimal design of sensing and transmission rates control in the case with infinite server data buffer capacity is presented in Section III, which is further extended to account for the case with finite server data buffer capacity in Sections IV. The experimental results are presented in Section VI. The conclusions are drawn in Section VII.

\section{System Model and Problem Formulation}
\subsection{System Model}
Consider the system as shown in Fig.~\ref{FigSys}, a server has $N$ tasks to be executed at time instants $\{t_1,t_2,...,t_N\}$. Completing the $n$-th task (e.g., learning model training \cite{li2021data}) requires $D_n$ amount of data sensed and wirelessly transmitted by mobile devices before instant $t_n$. Both the sensing and transmission rates are varying from time to time, denoted by $s(t)$ and $r(t)$ respectively. Let $C$ denote the number of \emph{central processing unit} (CPU) cycles for sensing one bit of data, the CPU cycle frequency can be determined by $f(t) = s(t)C$. Following the models in \cite{chandrakasan1992low,you2018exploiting,you2016energy}, under the assumption of a low CPU voltage, the sensing power consumption $P_s(t) = \alpha f^2(t) = \alpha C^2 s^2(t)$, where $\alpha$ is a constant determined by the circuits. According to the Shannon capacity \cite{goldsmith2005wireless}, the transmission power consumption $P_t(t) = \frac{\sigma^2}{g}(e^{r(t)/B}-1)$ with $\sigma^2$, $g$, and $B$ denoting the noise power, effective channel power gain, and spectrum bandwidth respectively. The objective is to minimize the energy consumption for data sensing and transmission, i.e,
\begin{equation}\label{Eq:minobjective}
\min_{s(t),r(t)} \int_{t=0}^{t_N} \left [\alpha C^2 s^2(t)+\frac{\sigma^2}{g}(e^{r(t)/B}-1)\right]dt.
\end{equation}
To guarantee the execution of tasks, both the sizes of the sensed and transmitted data should be no less than $D_n$ before instant $t_n$, which are known as the data requirement constraints:
\begin{equation}
\text{(Data sensing requirement)} \int_{t=0} ^{t_j}s(t) \geq \sum_{n=1}^j D_n, j = 1,2,...N, \label{Eq:sensearrival}
\end{equation}
\begin{equation}
\text{(Data transmission requirement)} \int_{t=0}^{t_j}r(t) \geq \sum_{n=1}^j D_n, j =1,2,...,N. \label{Eq:transarrival}
\end{equation}
If the amount of data that can be sensed or transmitted in the $n$-th epoch is larger than $D_n$, such epoch can be used to sense or transmit extra data required by the subsequent tasks. The extra sensed data and transmitted data are stored in the buffers of the mobile device and the server respectively, whose capacities are both assumed to be infinite. Since data needs to be sensed before transmission, the amount of transmitted data should be no longer than that of sensed data through the whole duration, which is known as the casualty constraint and expressed as
\begin{equation}\label{Eq:casualty}
\int_{t=0}^{\tau}s(t)dt \geq \int_{t=0}^{\tau}r(t)dt, \forall \tau \in [0,t_N].
\end{equation}

Despite data sensing, the mobile devices may have other tasks. Therefore, the CPU of mobile device might be busy and no data can be sensed during some time intervals, e.g., $[b_1,b_2]$ as shown in Fig.~\ref{FigSys}. The time sequence $\{t'_i\} = \{t_1,...,t_{i},b_1,t_{i+1},...,t_{j},b_2,t_{j+1},...,t_N\}$ are divided into $N' = N+2$ epochs with the length $T'_i = t'_i - t'_{i-1}$ for $i = 1,...,N'$ and $t'_0 = 0$. 

\begin{figure}[t]
\centering
\includegraphics[scale=0.5]{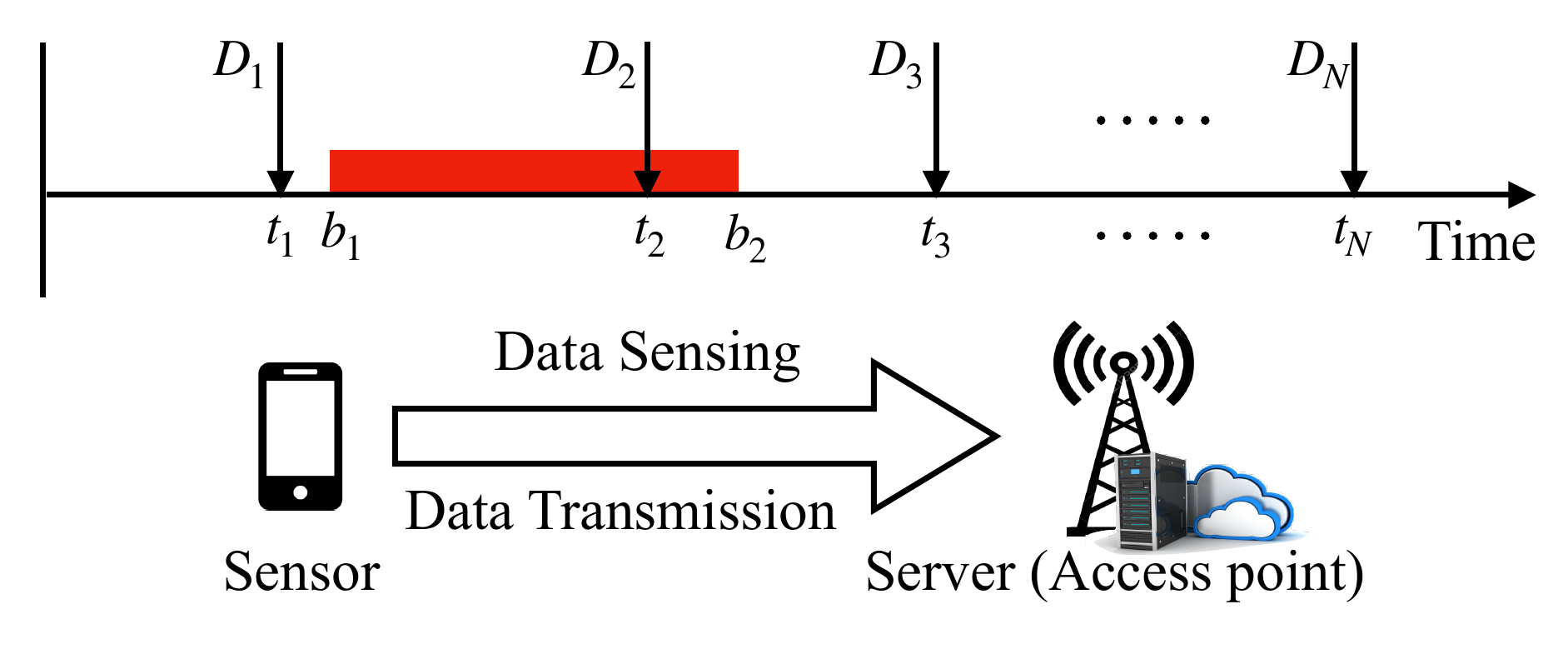}
\caption{MCS system with multiple tasks}
\label{FigSys}
\end{figure}


\subsection{Problem Formulation}
Based on the above discussion, the corresponding optimization problem can be formulated as 
\begin{subequations}
\begin{align}
\min_{s(t),r(t)} & \int_{t=0}^{t_N} \left [\alpha C^2 s^2(t)+\frac{\sigma^2}{g}(e^{r(t)/B}-1)\right]dt \label{Eq:P1a}\\ 
\text{(P1)} \qquad \text{s.t.} \quad
& \int_{t=0} ^{t_j} s(t) \geq \sum_{n=1}^j D_n, j = 1,...N ,\label{Eq:P1b}\\ 
& \int_{t=0}^{t_j}r(t) \geq \sum_{n=1}^jD_n, j =1,...,N,\label{Eq:P1c}\\ 
&\int_{t=0}^{\tau}s(t)dt \geq \int_{t=0}^{\tau}r(t)dt, \forall \tau \in [0,t_N] ,\label{Eq:P1d}\\ 
&s(t) = 0 ,\forall t \in [b_1,b_2].\label{Eq:P1e}
\end{align}
\end{subequations}
It should be noted that $\int _{t=0}^{t_N}s(t) dt = \int _{t=0}^{t_N}r(t) dt = \sum_{n=1}^{N}D_n$ must hold when the optimum is achieved, otherwise one can always decrease $s(t)$ or $r(t)$ without conflicting other constraints, and thus reduce the power consumption. The constraints in \eqref{Eq:P1b} and \eqref{Eq:P1c} guarantee the sensing and transmission of the required amount of data. The constraints in \eqref{Eq:P1d} indicate that the data should be sensed before transmitted. The constraints in \eqref{Eq:P1e} specify the duration when no data can be sensed.

\section{Joint Optimization of Sensing and Transmission Rates}
In this section, the original problem is first simplified without loss of optimality by converting the continuous variables $\{r(t)\}, \{s(t)\}$  into discrete variables $\{r_i\},\{s_i\}$ based on the structure of rate optimal control. To solve the simplified problem, an algorithm namely \emph{height search} based on the SP structure is proposed.

\subsection{Structure of Optimal Rate Control}
For the energy minimization problem, the following lemma shows that the constant-rate sensing and transmission within each epoch is optimal.

\begin{lemma}[Optimality of Constant Rate]\label{Lemma:Constant}\emph{In each epoch, the constant-rates sensing and transmission are optimal.
}
\end{lemma}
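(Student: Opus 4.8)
The plan is to prove this by a convexity and averaging argument, showing that any rate schedule which is non-constant within some epoch can be replaced by its time-average over that epoch, strictly decreasing (or at least not increasing) the energy objective while preserving all constraints. First I would fix an arbitrary feasible pair $(s(t), r(t))$ and focus on a single epoch $[t'_{i-1}, t'_i]$ of length $T'_i$. I would define the averaged rates $\bar{s}_i = \frac{1}{T'_i}\int_{t'_{i-1}}^{t'_i} s(t)\, dt$ and $\bar{r}_i = \frac{1}{T'_i}\int_{t'_{i-1}}^{t'_i} r(t)\, dt$, and then replace $s(t)$ and $r(t)$ on this epoch by these constants. The key observation is that averaging preserves the total data sensed and transmitted within the epoch, so the cumulative integrals $\int_0^{t'_k} s$ and $\int_0^{t'_k} r$ are unchanged at every epoch boundary $t'_k$; this is exactly where the data-requirement constraints \eqref{Eq:P1b} and \eqref{Eq:P1c} are evaluated, so those constraints remain satisfied.

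The heart of the argument is Jensen's inequality applied to the two convex cost functions. Since $\phi_s(x) = \alpha C^2 x^2$ and $\phi_r(x) = \frac{\sigma^2}{g}(e^{x/B} - 1)$ are both strictly convex in $x$, Jensen's inequality gives
\begin{equation}
\frac{1}{T'_i}\int_{t'_{i-1}}^{t'_i} \phi_s(s(t))\, dt \;\geq\; \phi_s(\bar{s}_i), \qquad \frac{1}{T'_i}\int_{t'_{i-1}}^{t'_i} \phi_r(r(t))\, dt \;\geq\; \phi_r(\bar{r}_i),
\end{equation}
with equality if and only if the rates are already constant almost everywhere on the epoch. Summing the resulting inequalities over all $N'$ epochs shows that the averaged schedule has objective value no larger than the original, and strictly smaller unless the original was already piecewise-constant. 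This establishes that an optimal schedule must be constant within each epoch.

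The main obstacle is verifying that the two constraints which are \emph{not} evaluated only at epoch boundaries, namely the causality constraint \eqref{Eq:P1d} and the sensing-busy constraint \eqref{Eq:P1e}, are preserved under averaging. For the busy-time constraint this is straightforward: the epochs are defined precisely so that $[b_1, b_2]$ is itself an epoch boundary-aligned interval, so averaging $s(t) \equiv 0$ over that epoch leaves $\bar{s}_i = 0$, keeping \eqref{Eq:P1e} intact. The causality constraint is more delicate, because it must hold for \emph{all} $\tau \in [0, t_N]$, not merely at epoch endpoints, and replacing curves by their piecewise-linear (constant-rate) counterparts could in principle create a new violation at an interior point. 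The clean way to handle this is to argue that the averaged cumulative sensing curve lies above the original where it matters: since within each epoch the averaged sensed-data curve is the chord connecting the endpoints of the original (concave-majorant style) argument, and the transmitted-data curve is likewise replaced by its chord, one shows the chord-vs-chord comparison preserves the ordering $\int_0^\tau s \geq \int_0^\tau r$ at interior points given that it holds at the shared endpoints. I would make this rigorous by treating sensing and transmission sequentially, or by invoking the standard result that among all monotone curves with fixed endpoint values, the linear interpolant is the cheapest under a convex rate cost while respecting a causality dominance that already holds at the breakpoints; this sequential/geometric argument is the step that requires the most care.
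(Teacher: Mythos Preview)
Your approach is the same convexity-and-averaging idea as the paper's, but you carry it out more rigorously: the paper argues informally with a two-rate discrete version (replace $s_i, s_{i+1}$ on a split of the epoch by their time-weighted mean) and never checks that any constraint other than ``same total data in the epoch'' is preserved, whereas your Jensen formulation handles arbitrary measurable rate profiles and you explicitly verify feasibility.

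Your worry about the causality constraint \eqref{Eq:P1d} at interior points is legitimate to raise but much easier to settle than you suggest. Once both $s$ and $r$ are replaced by their epoch-averages, the cumulative sensed and transmitted data curves $\tau \mapsto \int_0^\tau \bar s$ and $\tau \mapsto \int_0^\tau \bar r$ are each \emph{piecewise linear} with breakpoints only at the $t'_i$. Their difference is therefore linear on every epoch, and since that difference is nonnegative at both endpoints of each epoch (inherited from feasibility of the original schedule at those same times), it is nonnegative throughout the epoch by linear interpolation. No sequential replacement, concave-majorant reasoning, or ``most care'' is required; the chord-versus-chord comparison is immediate.
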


\begin{proof}
Assume that there are two sensing rates before and after instant $t'_i \in [t'_a,t'_b)$, denoted as $s_i$ and $s_{i+1}$ respectively. The sensing energy consumption is $E_s = \alpha C^2 s_i^2 (t'_i-t'_a) + \alpha C^2 s_{i+1}^2 (t'_b-t'_i)$. Let $s' = \frac{s_i (t'_i - t'_a) + s_{i+1} (t'_b - t'_i)}{t'_b - t'_a}$ denote the new sensing rate over $[t'_a,t'_b)$, the sensing power becomes $P_s' = \alpha C^2 \left(\frac{s_i (t'_i - t'_a) + s_{i+1} (t'_b - t'_i)}{t'_b - t'_a}\right)^2$. Due to the convexity, $P_s' \leq \alpha C^2 s_i^2 \frac{t'_i - t'_a}{t'_b - t'_a} + \alpha C^2 s_{i+1}^2 \frac{t'_b - t'_i}{t'_b - t'_a}$. The sensing energy consumption over this duration is $E_s' = \alpha C^2 (t'_b - t'_a) \left(\frac{s_i (t'_i - t'_a) + s_{i+1} (t'_b - t'_i)}{t'_b - t'_a}\right)^2 \leq E_s$. Therefore, the energy consumption under the new policy is less than that under the original policy for sensing the same amount of data in this epoch, and thus the original policy cannot be optimal. Following the similar approach, the transmission rate has the same property.
\end{proof}

Based on Lemma 1, the optimal sensing and transmission rates $s(t)$ and $r(t)$ might only change at $\{t'_i\}$. The corresponding rates in the $i$-th epoch are denoted by $s_i$ and $r_i$, respectively. As there is no task to be executed at instants $b_1$ and $b_2$, $D'_i = 0$ when $t'_i = b_1$ or $t'_i = b_2$, while $D'_i = D_n$ when $t'_i = t_n$. Therefore, the original problem can be converted to: 
\begin{subequations}
\begin{align}
\min_{\{r_i \geq 0\},\{s_i \geq 0\}}  ~
&  \sum_{i=1}^{N'} \left [\alpha C^2 s_i^2+\frac{\sigma^2}{g}(e^{r_i/B}-1)\right]T'_i \label{Eq:P2a}\\
\text{(P2)} \qquad \text{s.t.} \qquad
& \sum_{i=1}^j s_i T'_i \geq \sum_{i=1}^j D'_i, j = 1,...N',\label{Eq:P2b}\\
& \sum_{i=1}^j r_i T'_i\geq \sum_{i=1}^j D'_i, j = 1,...N',\label{Eq:P2c}\\
& \sum_{i=1}^j s_i T'_i \geq \sum_{i=1}^j r_i T'_i, j = 1,...N',\label{Eq:P2d}\\
& s_i = 0, \forall t'_i \in [b_1,b_2]. \label{Eq:P2e}
\end{align}
\end{subequations}
It should be noted that the coupling rates in \eqref{Eq:P2d} and the existence of busy time in \eqref{Eq:P2d} make (P2) non-trivial. A feasible solution is illustrated in Fig.~\ref{FigHeight}. To derive the optimal solution, an useful concept namely \emph{height $h$ is introduced, which denotes the amount of data that needs to be sensed before the busy time interval $b_1$}. 

\begin{figure}[t]
\centering
\includegraphics[scale=0.5]{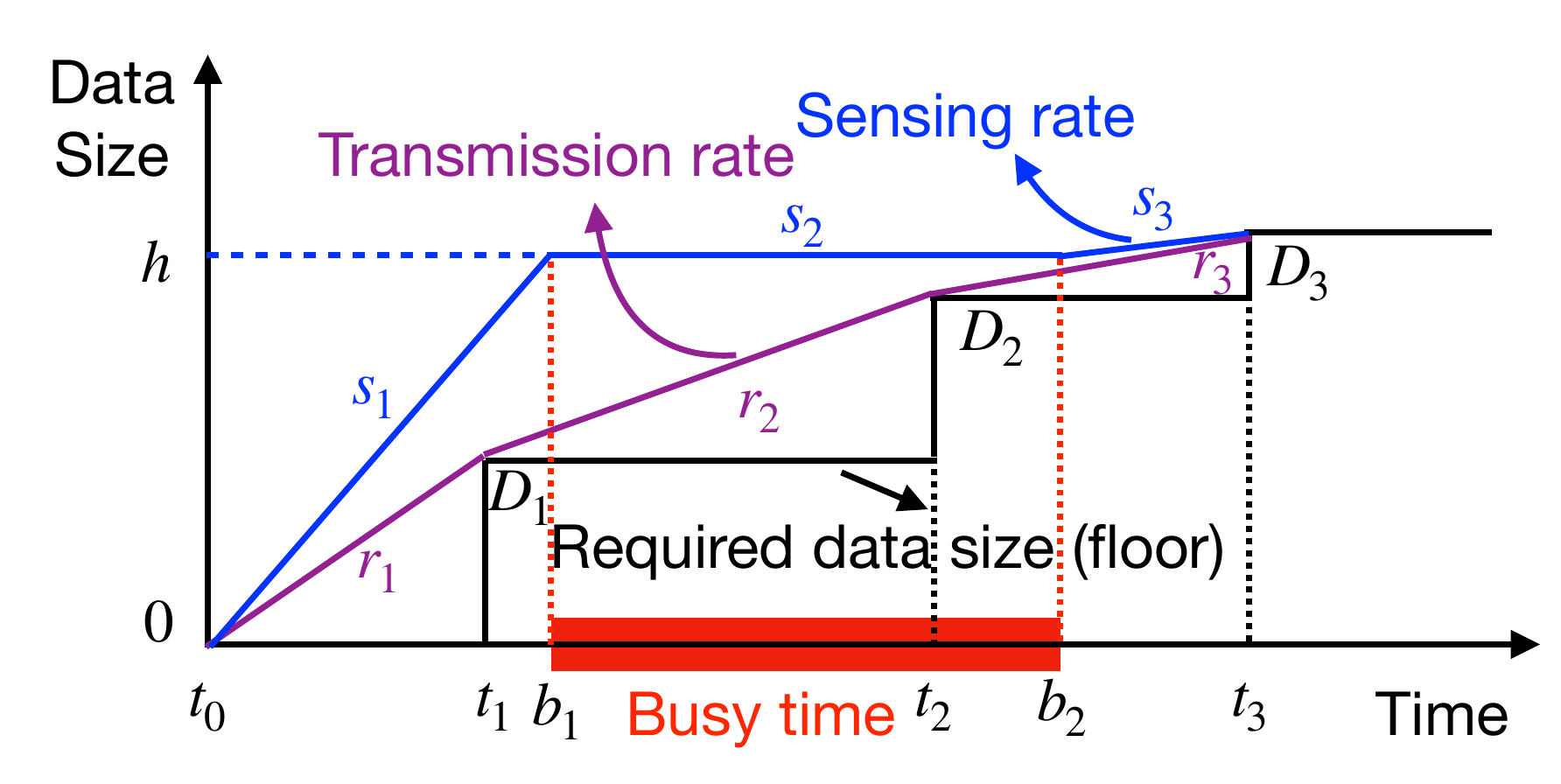}
\caption{A feasible rate control scheme}
\label{FigHeight}
\end{figure}


\subsection{Optimal Rates based on Given Height}
Given a certain height $h$, the optimization of sensing rates before the busy time interval aims at constructing the shortest path from a starting point (e.g., $(t_0,0)$ in Fig.~\ref{FigHeight}) to an ending point (e.g., $(b_1,h)$ in Fig.~\ref{FigHeight}) above the \emph{floor determined by the required data size $\{D_n\}$}, which can be performed by \emph{pulling a stretched string from the starting point to the ending point above the floor} and known as SP. The SP structure also holds for optimizing the sensing rates after the busy time interval by pulling a stretched string from $(b_2,h)$ to $(t_N,\sum_{n=1}^N D_N)$ above the floor determined by $\{D_n\}$. The corresponding optimized rates have the following property:




\begin{lemma}[Optimal Rate Control above the Floor]\label{Lemma:RateUn}\emph{The optimal rate above the floor is non-increasing along the epochs. Moreover, the rate might decrease only when the data requirement constraint is active.
}
\end{lemma}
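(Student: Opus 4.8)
The plan is to reduce the statement to a generic string-pulling (SP) subproblem and then settle it with a pair of local exchange (variational) arguments. By Lemma~\ref{Lemma:Constant} the optimal policy is piecewise constant, so on any stretch of consecutive epochs governed purely by a floor --- the sensing epochs on $[t'_0,b_1]$ ending at $(b_1,h)$, the sensing epochs on $[b_2,t_N]$ starting at $(b_2,h)$, or all transmission epochs --- the problem reads: minimize $\sum_i \phi(x_i)T'_i$ over rates $x_i\ge 0$ subject to the cumulative lower bounds $\sum_{l\le j} x_l T'_l \ge F_j$ with the endpoint fixed, where $x_i$ stands for either $s_i$ or $r_i$, $\phi$ is the corresponding strictly convex per-time cost ($\alpha C^2 x^2$ for sensing, $\tfrac{\sigma^2}{g}(e^{x/B}-1)$ for transmission), and $\{F_j\}$ is the staircase floor set by $\{D_n\}$ (and $h$).

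First I would establish monotonicity. Suppose, for contradiction, that $x_i < x_{i+1}$ at the optimum. Shift an infinitesimal amount of data $\delta>0$ from epoch $i+1$ into epoch $i$, i.e. $x_i \mapsto x_i+\delta/T'_i$ and $x_{i+1}\mapsto x_{i+1}-\delta/T'_{i+1}$. This leaves every cumulative value $\sum_{l\le j}x_l T'_l$ unchanged for $j\neq i$ and \emph{increases} it for $j=i$, so all floor constraints remain satisfied and the endpoint is preserved; nonnegativity holds for small $\delta$ since $x_{i+1}>x_i\ge 0$. The first-order change of the cost is $(\phi'(x_i)-\phi'(x_{i+1}))\,\delta$, which is strictly negative because $\phi'$ is increasing and $x_i<x_{i+1}$. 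Hence the cost strictly decreases, contradicting optimality, and therefore $x_i\ge x_{i+1}$ for every $i$.

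Next I would pin down where strict decreases can occur. Suppose the data-requirement constraint at $t'_i$ is inactive, i.e. $\sum_{l\le i}x_l T'_l > F_i$, yet $x_i>x_{i+1}$. Now shift $\delta>0$ in the opposite direction, $x_i\mapsto x_i-\delta/T'_i$ and $x_{i+1}\mapsto x_{i+1}+\delta/T'_{i+1}$. Again only the cumulative at $t'_i$ changes, decreasing by $\delta$; the strict slack $\sum_{l\le i}x_l T'_l-F_i>0$ absorbs this for small $\delta$, so feasibility is retained and nonnegativity holds since $x_i>x_{i+1}\ge 0$. The first-order cost change is $(\phi'(x_{i+1})-\phi'(x_i))\,\delta<0$, again contradicting optimality. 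Thus whenever the constraint at $t'_i$ is inactive we must have $x_i=x_{i+1}$; equivalently, the rate can drop across an epoch boundary only where that constraint is tight.

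The step I expect to be most delicate is the feasibility bookkeeping that makes each two-epoch exchange admissible: I must verify that a perturbation localized to epochs $i,i+1$ disturbs exactly one cumulative value (that at $t'_i$), hence at most one floor constraint, and that the forced zero-rate busy epoch together with the two fixed endpoints $(b_1,h)$ and $(b_2,h)$ cleanly decouples the sensing problem into two independent floor-constrained stretches to which the argument applies verbatim. Everything else is routine: the strict convexity of both $\alpha C^2 x^2$ and $\tfrac{\sigma^2}{g}(e^{x/B}-1)$ guarantees the strict cost decrease, so the two exchanges together yield the non-increasing property and the active-constraint characterization. As a sanity check, the same conclusion can be read off geometrically --- the taut string is the concave majorant of the staircase floor between the fixed endpoints, whose slope is non-increasing and changes only at the corners it touches --- or from the KKT conditions of the convex subproblem.
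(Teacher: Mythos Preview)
Your proof is correct and follows essentially the same approach as the paper: both argue by a local two-epoch exchange exploiting the strict convexity of the cost, first to rule out $x_i<x_{i+1}$ and then to force $x_i=x_{i+1}$ whenever the floor constraint at $t'_i$ is slack. The only cosmetic difference is that the paper equalizes fully to the weighted average $s'=\frac{s_iT'_i+s_{i+1}T'_{i+1}}{T'_i+T'_{i+1}}$ whereas you use an infinitesimal $\delta$-shift with a first-order argument via $\phi'$; your version is more explicit about the feasibility bookkeeping, but the underlying idea is identical.
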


\begin{proof}
Due to the convexity of $P_s(t)$, if two feasible rates in consecutive epochs $i$ and $i+1$ satisfy $s_i < s_{i+1}$, one can always find a feasible rate $s' = \frac{s_i T'_i + s_{i+1} T'_{i+1}}{T'_i + T'_{i+1}}$ with less energy consumption. Moreover, if there exists a gap between the sensed and required data size, there exists room for equalizing the rates in two epochs to reduce the energy consumption. Such property also holds for data transmission with the specified proof in Section III-B of \cite{li2021data}.
\end{proof}

Based on the non-increasing property of the rate given in Lemma \ref{Lemma:RateUn}, the SP algorithm in \cite{li2021data} is applied to obtain the optimal sensing rates in each epoch by iteratively finding the rate curve with the largest slope, which is summarized in Algorithm \ref{A1}. As for transmission rate optimization, \emph{the feasible region is a tunnel with the floor determined by the required data size $\{D_n\}$, and the ceiling determined by the height $h$ during the busy time interval} due to the casualty constraint \eqref{Eq:casualty}. The optimized rates based on SP in such tunnel have the following property:

\begin{algorithm}[tt]
\renewcommand{\algorithmicrequire}{\textbf{Input:}}
\renewcommand{\algorithmicensure}{\textbf{Output:}}
\caption{SP Algorithm for Optimal Rate Control above the Floor.}
\label{A1}
\begin{algorithmic}[1]
\REQUIRE required data sizes $\{D_n\}$ at instants $\{t_n\}$ for $N$ tasks.
\ENSURE the optimal rates $\{r_i^*\}$ and durations $\{T_i^*\}$.
\STATE Initialize $j_0 = 0$, $i = 0$.
\STATE \textbf{while} $j_i < N$
\STATE \qquad Update $i = i+1$.
\STATE \qquad Calculate $j_i = \arg\max_{j:j_{i-1}<j \leq N}\left\{\frac{\sum_{n=j_{i-1}+1}^{j}D_n}{t_j-t_{j_{i-1}}}\right\}$. 
\STATE \qquad Calculate $r_i^*= \left\{\frac{\sum_{n=j_{i-1}+1}^{j_i}D_n}{t_{j_i}-t_{j_{i-1}}}\right\}$.
\STATE \qquad Calculate $T_i^* = t_{j_i}-t_{j_{i-1}}$.
\STATE \textbf{End while}
\STATE \textbf{Return} the optimal sensing rates $\{r_i^*\}$ and durations $\{T_i^*\}$.
\end{algorithmic}
\end{algorithm}

\begin{lemma}[Optimal Rate Control in the Tunnel]\label{Lemma:RateLimited}\emph{The optimal rate in the tunnel might only decrease when the data requirement constraint is active, and might only increase when the casualty constraint is active.
}
\end{lemma}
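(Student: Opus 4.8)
The plan is to argue by a local \emph{exchange} (rate-equalization) perturbation, exactly mirroring the smoothing step already used in Lemma~\ref{Lemma:Constant} and Lemma~\ref{Lemma:RateUn}, but now keeping track of both the floor and the ceiling of the tunnel simultaneously. Writing the cumulative transmitted data as $R_j=\sum_{i=1}^{j}r_iT'_i$, the cumulative sensed data (the ceiling) as $S_j=\sum_{i=1}^{j}s_iT'_i$, and the floor as $\mathcal{D}_j=\sum_{i=1}^{j}D'_i$, the feasible region of (P2) for the transmission rates is precisely the tunnel $\mathcal{D}_j\le R_j\le S_j$, and the optimal cumulative curve is the taut string pulled through this tunnel. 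The key fact I would use is that the per-epoch cost $P_t(r)=\frac{\sigma^2}{g}(e^{r/B}-1)$ is \emph{strictly convex} in $r$, so that for any two consecutive epochs the merged rate $r'=\frac{r_iT'_i+r_{i+1}T'_{i+1}}{T'_i+T'_{i+1}}$ satisfies $P_t(r')(T'_i+T'_{i+1})<P_t(r_i)T'_i+P_t(r_{i+1})T'_{i+1}$ whenever $r_i\neq r_{i+1}$. Crucially, replacing $(r_i,r_{i+1})$ by a value strictly between the original rates and $r'$ leaves $R_{i-1}$ and $R_{i+1}$ untouched and changes only the single junction value $R_i$, so the only constraints whose feasibility can be affected are the floor $R_i\ge\mathcal{D}_i$ and the ceiling $R_i\le S_i$ at index $j=i$.

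With this reduction, I would split into the two cases named in the statement. If the rate \emph{decreases} across $t'_i$, i.e. $r_i>r_{i+1}$, then nudging $r_i$ down and $r_{i+1}$ up toward $r'$ lowers $R_i$, which respects the ceiling automatically since the curve only moves downward. Such an energy-reducing perturbation is blocked only if the floor is already touched, $R_i=\mathcal{D}_i$; otherwise any sufficiently small step strictly decreases the objective and stays feasible, contradicting optimality. Hence a decrease can occur only where the data-requirement constraint \eqref{Eq:P2c} is active. Symmetrically, if the rate \emph{increases}, $r_i<r_{i+1}$, equalizing raises $R_i$ and respects the floor automatically, which is again energy-reducing and feasible unless the ceiling is touched, $R_i=S_i$; hence an increase can occur only where the casualty constraint \eqref{Eq:P2d} is active. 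Combining the two cases yields the claim.

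The main obstacle, and the reason the tunnel case is subtler than the floor-only Lemma~\ref{Lemma:RateUn}, is the simultaneous presence of both boundaries: one must rule out that an energy-reducing exchange is forbidden by the \emph{other} boundary. This is resolved by the directional observation above, namely that a local decrease pushes the junction value only toward the floor and never toward the ceiling, while a local increase pushes it only toward the ceiling and never toward the floor, so at a non-binding junction the relevant boundary always leaves room. A secondary point I would address carefully is that full equalization is not needed to reach a contradiction: by strict convexity even an infinitesimal exchange strictly lowers the cost, so it suffices that the single boundary at $t'_i$ have strict slack. This keeps the argument purely local, avoids any cascading feasibility check across the remaining epochs, and matches the two-sided string-pulling characterization used for the finite-buffer setting in \cite{zafer2009calculus}.
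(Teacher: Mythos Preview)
Your proposal is correct and follows essentially the same approach as the paper: a local rate-equalization (exchange) argument exploiting the strict convexity of $P_t(\cdot)$, combined with the observation that equalizing a decrease moves the junction cumulative only toward the floor while equalizing an increase moves it only toward the ceiling. The paper's proof is terser---it works in continuous time with a symmetric window $[t-\tau,t+\tau]$ and the midpoint rate $r'=\tfrac{r(t^-)+r(t^+)}{2}$, then defers details to \cite{li2021data}---whereas your discrete-epoch version spells out explicitly that only the single junction value $R_i$ is affected and that an infinitesimal step suffices, but the logical core is identical.
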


\begin{proof}
Suppose that the rate changes at arbitrary time $t$, so that $r(t^-) \neq r(t^+)$. Let $r' = \frac{r(t^-)+r(t^+)}{2}$ be the constant rate in $[t-\tau,t+\tau]$. If $r(t^-) < r(t^+)$, then $r'$ is feasible only when the casualty constraint is inactive. If $r(t^-) > r(t^+)$, then $r'$ is feasible only when the data requirement constraint is inactive. According to Lemma \ref{Lemma:Constant}, the application of $r'$ can reduce the energy consumption and thus the original rates are not optimal. The specified proof can be found in Section IV-A of \cite{li2021data}.
\end{proof}

Based on Lemma \ref{Lemma:RateLimited}, the SP algorithm in \cite{li2021data} is applied to obtain the optimal transmission rates in each epoch by iteratively finding the feasible region and the corresponding constant rates, which is summarized in Algorithm \ref{A2}. Though the optimal sensing and transmission rates based on a certain height can be obtained by applying Algorithms \ref{A1} and \ref{A2} separately, the casualty constraints in the non-busy time intervals are ignored. Fortunately, it can be proved that the rates determined by Algorithms \ref{A1} and \ref{A2} won't violate the casualty constraints, the whole process for finding the optimal sensing and transmission rates is summarized in Algorithm \ref{A3}. 

\begin{algorithm}[tt]
\renewcommand{\algorithmicrequire}{\textbf{Input:}}
\renewcommand{\algorithmicensure}{\textbf{Output:}}
\caption{SP Algorithm for Optimal Rate Control in the Tunnel.}
\label{A2}
\begin{algorithmic}[1]
\REQUIRE required data sizes $\{D_n\}$ at instants $\{t_n\}$, maximum data sizes $\{A(t_m)\}$ at instants $\{t_m\}$.
\ENSURE the optimal rates $\{r_i^*\}$ and durations $\{T_i^*\}$.
\STATE Initialize $v_b = 0$, $v_1 = 0$, $i = 0$, $n_1 = 0$.
\STATE \textbf{while} $M > 0$
\STATE \qquad Update $i = i+1$.
\STATE \qquad \textbf{for} $m = 1,...,M$
\STATE \qquad \qquad $r_{\text{low}}[m] = \frac{\sum_{n: 0 \leq t_n < t_m} D_n}{t_m}$,
\STATE \qquad \qquad $r_{\text{high}}[m] = \frac{A(t_m)}{t_m}$,
\STATE \qquad \qquad $\bold{r}[m] = [r_{\text{low}}[m],r_{\text{high}}[m]] = \{r|r_{\text{low}}[m] \leq r \leq r_{\text{high}}[m]\}$.
\STATE \qquad \textbf{end for}
\STATE \qquad Update $v_b = \max\left\{v|\bigcap_{m=1}^u\bold{r}[m]\neq \emptyset, m=1,2,...,M\right\}$.
\STATE \qquad \textbf{if} $v_b = M$ 
\STATE \qquad \qquad Update $v_1 = \max\left\{v|r_{\text{low}}[v] \in \bigcap_{j=1}^{v_b}\bold{r}[m]\right\}$, $r_i^* = r_e[v_1]$, $T_i^* = t_{v_1}$.
\STATE \qquad \textbf{else} 
\STATE \qquad \qquad \textbf{if} $\bold{r}[v_b+1]$ falls below $\bigcap_{m=1}^{v_b}\bold{r}[m]$
\STATE \qquad \qquad \qquad Update $v_1 = \max\left\{v|r_{\text{low}}[v] \in \bigcap_{m=1}^{v_b}\bold{r}[m]\right\}$, $r_i^* = r_{\text{low}}[v_1]$ and $T_i^* = t_{v_1}$.
\STATE \qquad \qquad \textbf{else}
\STATE \qquad \qquad \qquad Update $v_1 = \max\left\{v|r_{\text{high}}[v] \in \bigcap_{m=1}^{v_b}\bold{r}[m]\right\}$, $r_i^* = r_{\text{high}}[v_1]$ and $T_i^* = t_{v_1}$.
\STATE \qquad \qquad \textbf{end if}
\STATE \qquad \textbf{end if}
\STATE \qquad Update $M \!=\! M \!-\! v_1$, $t_m \!=\! t_{m+v_1} \!-\! t_{v_1}$, $n_1 \!=\! \max\{n|t_n \!\leq\! t_{v_1}\}$, $t_n \!=\! t_{n+n_1} \!-\! t_{v_1}$, $t_0 \!=\! 0$.
\STATE \qquad Update $D' = r_i^* T_i^* - \sum_{n=0}^{n_1} D_n$, $D_n = D_{n+n_1}$, $D_1 = D_1 - D'$, $A(t_m) = A(t_{m+v_1}) - r_i^* T_i^*$.
\STATE \textbf{end while}
\STATE \textbf{Return} the optimal rates $\{r_i^*\}$ and durations $\{T_i^*\}$.
\end{algorithmic}
\end{algorithm}



\begin{proposition}[Optimal Sensing and Transmission Rates Given Fixed Height] \label{prop:Optfixedh} \emph{The solutions obtained by Algorithm \ref{A3} are the optimal sensing and transmission rates given the fixed height.}
\end{proposition}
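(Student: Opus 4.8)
The plan is to establish optimality in three stages. First I would decompose (P2) for a fixed height $h$ into an independent sensing subproblem and transmission subproblem. Second, I would invoke the SP optimality of Algorithms \ref{A1} and \ref{A2} for these two subproblems. Third, and this is the crux, I would verify that the two curves produced separately jointly satisfy the coupling casualty constraint \eqref{Eq:P2d} at every epoch, so that the concatenated output of Algorithm \ref{A3} is feasible for (P2) and attains the decomposed minimum.

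For the decomposition, I would argue that once $h$ is fixed, the only place where sensing and transmission are forced to interact is the busy interval: since no data is sensed on $[b_1,b_2]$, the cumulative sensed curve $S(t)=\sum_{i\le j}s_iT'_i$ is flat at height $h$ there, which caps the cumulative transmitted curve $R(t)=\sum_{i\le j}r_iT'_i$ at $R(t)\le h$. Encoding this cap as the ceiling of the transmission tunnel on $[b_1,b_2]$ (together with $S(b_1)=h$ as the sensing endpoint) renders the sensing energy and the transmission energy separable. The sensing subproblem becomes exactly the shortest path above the floor $\{D_n\}$ through $(b_1,h)$ and from $(b_2,h)$, solved optimally by Algorithm \ref{A1} via Lemma \ref{Lemma:RateUn}; the transmission subproblem becomes the shortest path in the tunnel with floor $\{D_n\}$ and ceiling $h$ on the busy interval, solved optimally by Algorithm \ref{A2} via Lemma \ref{Lemma:RateLimited}. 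Optimality of each SP routine for its own subproblem follows from the convexity-based exchange arguments recorded in Lemmas \ref{Lemma:Constant}, \ref{Lemma:RateUn}, \ref{Lemma:RateLimited} and the detailed derivations cited from \cite{li2021data}.

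The main obstacle is the third stage, because Algorithms \ref{A1} and \ref{A2} enforce \eqref{Eq:P2d} only through the busy-interval ceiling and otherwise ignore it on the non-busy intervals $[0,b_1]$ and $[b_2,t_N]$. I would verify $S(t)\ge R(t)$ everywhere by a crossing/chord argument. On each non-busy interval, Lemma \ref{Lemma:RateUn} makes $S$ concave (non-increasing sensing rate), while Lemma \ref{Lemma:RateLimited} forces $R$ to be linear on any sub-interval where neither its data-requirement nor its casualty constraint is active. Suppose toward a contradiction that $R>S$ on a maximal component $(\tau_1,\tau_2)$ of a non-busy interval; there $R$ lies strictly above the floor (since $S$ already does), so the data-requirement constraint is inactive, and the ceiling is inactive because it binds only on $[b_1,b_2]$, whence $R$ is linear on $(\tau_1,\tau_2)$. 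At the component's endpoints $R$ meets $S$, which uses the boundary data $R(0)=S(0)=0$, $R(t_N)=S(t_N)=\sum_{n}D_n$, and the height cap $R(b_1)\le h=S(b_1)$, $R(b_2)\le h=S(b_2)$ inherited from the busy interval. But a line segment joining two points of a concave curve lies below that curve, so $R\le S$ on $(\tau_1,\tau_2)$, contradicting $R>S$. Hence no such component exists and \eqref{Eq:P2d} holds at every epoch.

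Finally I would assemble the pieces: the concatenated curves are feasible for (P2) at height $h$ — meeting \eqref{Eq:P2b}, \eqref{Eq:P2c}, \eqref{Eq:P2e} by construction and \eqref{Eq:P2d} by the argument above — and their total energy equals the sum of the two subproblem minima. Since any pair feasible for (P2) at height $h$ is in particular feasible for each (relaxed) subproblem, that sum lower-bounds the energy of every feasible pair, so Algorithm \ref{A3} is optimal for the fixed-height problem. I expect the only delicate points to be justifying that the busy-interval ceiling is the sole active coupling, so the decomposition is lossless, and handling the degenerate cases of the crossing argument when a component touches $b_1$, $b_2$, or $t_N$; both are dispatched by the non-decreasing, piecewise-linear SP structure of the two curves.
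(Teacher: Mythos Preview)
Your proposal follows the same three-stage scaffold as the paper: fix $h$, drop the coupling constraint \eqref{Eq:P2d}, solve the sensing and transmission subproblems independently via Algorithms \ref{A1} and \ref{A2}, and then verify that the two separately obtained curves already satisfy \eqref{Eq:P2d}. The difference is in how that last verification is carried out. The paper argues by energy reduction: if \eqref{Eq:P2d} were violated on some epoch, one could decrease $r$ there to coincide with $s$; since $s$ meets the data-requirement floor, the modified $r$ is still feasible for the transmission subproblem but has strictly smaller energy, contradicting the optimality of $r^*$ from Algorithm \ref{A2}. You argue geometrically: on any maximal violation component the transmission cumulative $R$ is affine (neither tunnel wall is active there by Lemma \ref{Lemma:RateLimited}) while the sensing cumulative $S$ is concave on each non-busy piece (non-increasing rate by Lemma \ref{Lemma:RateUn}), so the chord $R$ cannot lie above $S$ between two meeting points. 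Both routes are short; the paper's variational step is terser but glosses over why lowering $r$ on one epoch preserves the floor constraints at all later epochs, whereas your chord argument handles every epoch simultaneously and makes explicit how the SP structure of both curves is what forces $S\ge R$.
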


\begin{proof}
It can be observed that the only constraint that causes the coupling relationship between sensing and transmission rates is constraint \eqref{Eq:P2d}. Without this constraint, the optimal sensing rate $s^*(t)$ and transmission rate $r^*(t)$ can be derived by algorithms 1 and 2, respectively. Then we only need to prove that the rates derived by algorithms 1 and 2 are feasible w.r.t. the constraint \eqref{Eq:P2d}. Assume that constraint \eqref{Eq:P2d} is violated for the interval $[t_i , t_{i+1}]$, we can decrease the transmission rate $r(t)$ such that $r(t) = s(t)$ in $[t_i , t_{i+1}]$. Since $\int_{t=0}^{t_j} s(t) dt \geq \sum_{n=1}^j D_n$, so does $\int_{t=0}^{t_j} r(t) dt \geq \sum_{n=1}^j D_n$. Therefore, the transmission energy can be reduced without violating any constraints, which contradicts the optimality of $r^*(t)$ and thus impossible.
\end{proof}

\begin{remark}[Optimal Rates without Busy Time]\label{Rem:NoBusy}\emph{When there is no busy time, both the optimal sensing and transmission rates have the SP structures and can be derived by Algorithm~\ref{A1}, which are equal to each other in all time slots, i.e., $s(t) = r(t)~\forall~t \in [0,t_N]$.
}
\end{remark}

\begin{algorithm}[tt]
\renewcommand{\algorithmicrequire}{\textbf{Input:}}
\renewcommand{\algorithmicensure}{\textbf{Output:}}
\caption{Optimal Sensing and Transmission Rates Searching Algorithm given Fixed Height.}
\label{A3}
\begin{algorithmic}[1]
\REQUIRE Required data amounts $\{D_n\}$ at instants $\{t_n\}$ for $N$ tasks, busy time interval $[b_1,b_2]$, and the amount of sensed data (height) $h$ before the busy time.
\ENSURE Optimal transmission rates $\{r_i^*\}$ and durations $\{{T_i^{r}}^*\}$, optimal sensing rates $ \{s_i^*\} $ and durations $\{{T_i^{s}}^*\}$.
\STATE Initialize $n_1 = \max \{n|t_n < b_1\}$, $n_2 = \min\{n|t_n > b_2\}$.
\STATE Set $\{D_n^1\} = \{D_0,D_1,D_2,...,D_{n_1},h\}$ at instants $\{t_n^1\} = \{t_0,t_1,t_2,...,t_{n_1},b_1\}$.
\STATE Set $\{D_n^2 \} = \{h,D_{n_2}-h,D_{n_2+1}-h,...,D_n-h\}$ at instants $\{t_n^2\} = \{b_2,t_{n_2},t_{n_2 + 1},...,t_n\}$.
\STATE Given $\{D_n^1\}$, $\{D_n^2\}$, $\{t_n^1\}$ and $\{t_n^2\}$, find the optimal sensing rates $\{s_i^1\}$, $\{s_i^2\}$ and durations $\{T_i^1\}$, $\{T_i^2\}$ by Algorithm \ref{A1}.
\STATE Get the optimal sensing rates $\{s_i ^*\} = \{\{s_i^1\},0,\{s_i^2\}\}$ and durations $\{{T_i^{s}}^*\} = \{\{T_n ^1\},b_2 - b_1,\{T_n ^2\}\}$.
\STATE Obtain $\{t_m\}$ by sorting the instants $\{t_n\},b_1,b_2$ in increasing order. 
\STATE Set $A(t_m) = h$ if $t_m = b_1$ or $t_m = b_2 $. Otherwise, $A(t_m) = +\infty$.
\STATE Find the optimal transmission rates $\{r_i^*\}$ and durations $\{{T_i^{r}}^*\}$ by Algorithm \ref{A2}.
\STATE \textbf{Return} $\{r_i^*\}, \{T_i ^*\},\{s_i ^*\},\{T_i^{**}\}$
\end{algorithmic}
\end{algorithm}
 
 \subsection{Searching Area for the Optimal Height} 
According to Proposition~\ref{prop:Optfixedh}, the remaining problem is to search for the optimal height. The first part of height searching algorithm is to find a specific area where the optimal height lies in, which depends on the lemmas below. 

\begin{lemma}[Minimum Sensing Energy Consumption]\label{lemma:MinSense} \emph{The sensing energy consumption achieves its minimum when the sensing rates in the epochs adjacent to the busy time interval are equal. }
\end{lemma}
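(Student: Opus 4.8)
The plan is to regard the minimum achievable sensing energy as a function of the height $h$ and to pin down its minimizer through a marginal-cost (first-order) argument. At a fixed $h$, the optimal sensing rates split into a pre-busy problem on $[t_0,b_1]$ that senses $h$ bits and a post-busy problem on $[b_2,t_N]$ that senses the residual $\sum_{n=1}^{N}D_n-h$ bits, each solved by the string-pulling construction of Algorithm~\ref{A1}. Writing the resulting minimum energies as $E_1(h)$ and $E_2(h)$, the total sensing energy is $E_s(h)=E_1(h)+E_2(h)$. The first step is to establish that $E_s$ is convex in $h$: each $E_j$ is the value function of a convex program (the quadratic power $\alpha C^2 s^2$ is convex and the endpoint constraint enters linearly in $h$), so $E_1$, $E_2$, and their sum are convex. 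Consequently any stationary point of $E_s$ is automatically its global minimum, and it suffices to locate where $E_s'(h)=0$.

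The second step computes the two marginal costs. By the non-increasing structure of Lemma~\ref{Lemma:RateUn}, the last epoch before $b_1$ carries the smallest pre-busy rate $s_{\text{before}}$, while the first epoch after $b_2$ carries the largest post-busy rate $s_{\text{after}}$. Raising $h\to h+\delta$ pivots the final pre-busy segment about its last contact point with the floor, adding exactly $\delta$ bits to that segment and leaving the remaining segment boundaries fixed for small $\delta$; differentiating $\alpha C^2 s^2 T$ then gives $\frac{dE_1}{dh}=2\alpha C^2 s_{\text{before}}$. Symmetrically, the same perturbation removes $\delta$ bits from the initial post-busy segment (its start point moves up), yielding $\frac{dE_2}{dh}=-2\alpha C^2 s_{\text{after}}$. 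The third step imposes stationarity, $E_s'(h)=2\alpha C^2\bigl(s_{\text{before}}-s_{\text{after}}\bigr)=0$, which forces $s_{\text{before}}=s_{\text{after}}$; convexity from the first step certifies this equal-rate configuration as the global energy minimizer, proving the claim. This is also consistent with Remark~\ref{Rem:NoBusy}: when the adjacent rates match, the sensing string passes through the busy interval as if it were absent.

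The step I expect to be the main obstacle is the non-smoothness of $E_s(h)$. As $h$ sweeps its range, the string-pulling segment structure changes discretely whenever the adjusted final (or initial) segment slope rises to meet its neighbor and two segments merge, so $E_s$ is only piecewise smooth and $E_s'$ may fail to exist at these breakpoints. I would handle this by working with the left and right derivatives of the convex function $E_s$, interpreting $s_{\text{before}}(h)$ and $s_{\text{after}}(h)$ as the appropriate one-sided limits; the optimality condition then reads that the minimizing $h$ is the value at which the increasing marginal pre-busy rate has risen to meet the decreasing marginal post-busy rate, which is the precise sense in which the two adjacent rates are ``equal.'' A secondary point requiring care is the feasible range of $h$ dictated by the data-requirement floors of Lemma~\ref{Lemma:RateUn}: if the equalizing height lies outside this range the minimum is attained at an endpoint, but in the interior case the equal-rate characterization holds exactly as stated.
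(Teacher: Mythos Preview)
Your proposal is correct and shares the paper's core idea---view the minimum sensing energy as a convex function of the height $h$ and locate its minimizer by a first-order condition that forces the two adjacent rates to coincide. The paper's own proof is much terser and effectively treats each side of the busy interval as a single constant-rate segment: it writes the adjacent-epoch energy explicitly as the quadratic $\alpha C^2 b_1\,(h/b_1)^2 + \alpha C^2 (T-b_2)\bigl((D-h)/(T-b_2)\bigr)^2$, notes it is first decreasing then increasing in $h$, and sets the derivative to zero to get $h/b_1 = (D-h)/(T-b_2)$. Your envelope/marginal-cost computation $dE_1/dh = 2\alpha C^2 s_{\text{before}}$, $dE_2/dh = -2\alpha C^2 s_{\text{after}}$ is the proper generalization that remains valid when the pre- and post-busy strings have multiple segments touching the floor, and your discussion of piecewise smoothness and of the feasible range of $h$ addresses points the paper leaves implicit (indeed, the paper defers the feasibility caveat to the subsequent Proposition~\ref{prop:LowBound}). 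So the two arguments agree at the level of ideas; yours is the more complete treatment, while the paper's buys brevity by working in the simplest representative case.
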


\begin{proof}
Given the height $h$, the sensing energy consumption in the epochs adjacent to the busy time interval will be $\alpha C^2 b_1(\frac{h}{b_1})^2+ \alpha C^2 (T-b_2) (\frac{D-h}{t-b_2})^2 $, which is firstly decreasing and then increasing with increasing $h$. Let the derivative w.r.t $h$ equal to 0, one can get $\frac{h}{b_1} = \frac{D-h}{t-b_2}$. 
\end{proof}

\begin{remark}[Effect of Height on Sensing Rates]\label{Rem:Hight}\emph{It can be observed from the proof of Lemma~\ref{lemma:MinSense} that the sensing rate before the busy time is increasing with the height, while the sensing rate after the busy time is decreasing with the height.
}
\end{remark}

\begin{figure}[t]
\centering
\includegraphics[scale=0.5]{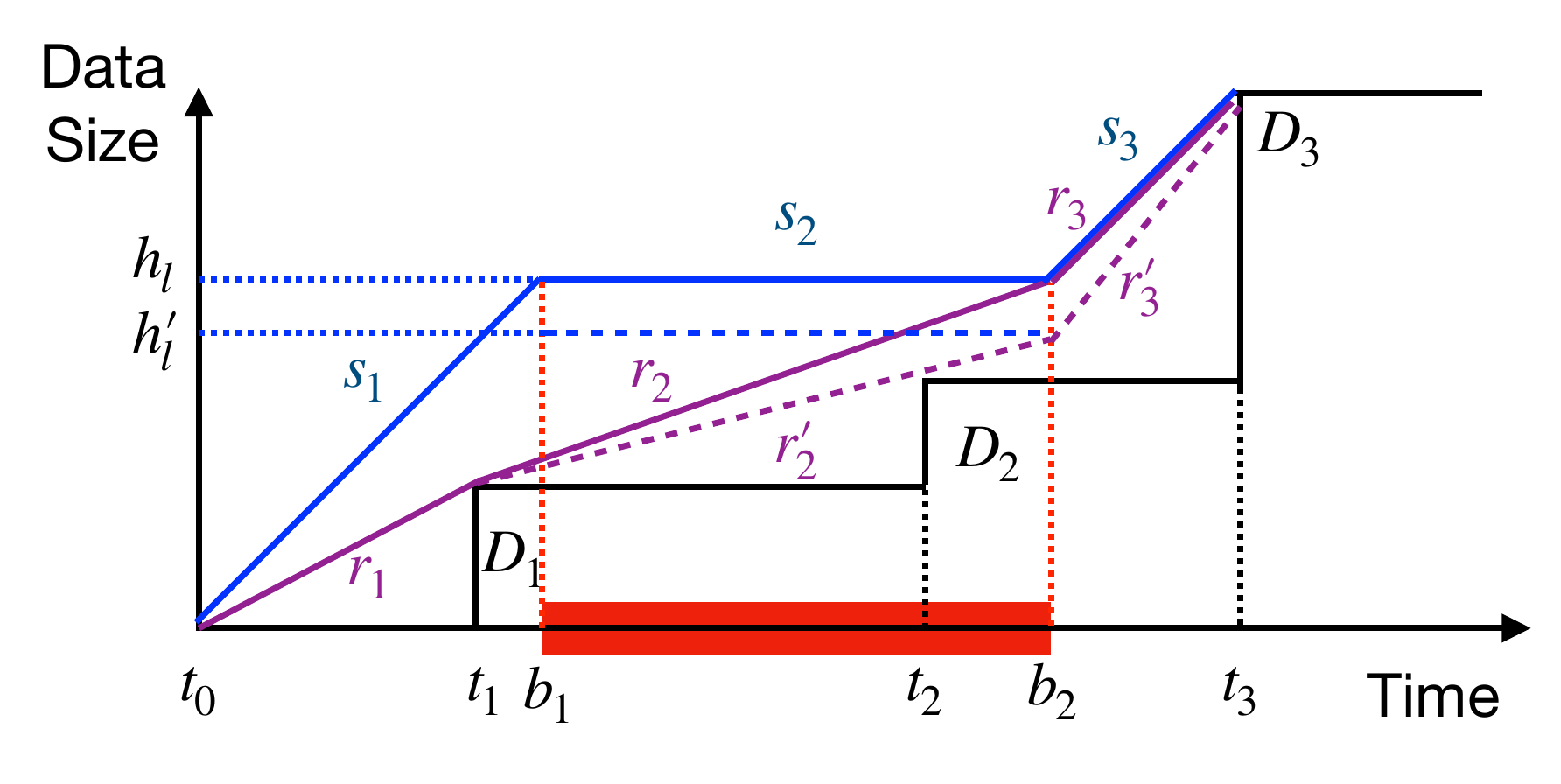}
\caption{The lower bound of searching area for optimal height}
\label{FigLower}
\end{figure}

Based on Lemma \ref{lemma:MinSense}, the lower bound of the searching area for the optimal height is obtained in the following proposition. 

\begin{proposition}[Lower Bound of the Searching Area]\label{prop:LowBound} \emph{The lower bound of the searching area is the height that the sensing rates in the epochs adjacent to the busy duration are equal if it is feasible. Otherwise the lower bound is the required data size by the end of the busy time.}
\end{proposition}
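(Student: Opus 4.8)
The plan is to show that on the feasible range of heights the total energy $E(h) = E_s(h) + E_t(h)$ cannot be minimized by any $h$ strictly below the claimed bound, so the height search may be started there without loss of optimality. I would first isolate the two monotonicity facts that drive the argument. By Lemma~\ref{lemma:MinSense} the sensing energy $E_s(h)$ is convex in $h$ and attains its minimum at the height $h_{\mathrm{eq}}$ for which the two sensing rates adjacent to the busy interval coincide, i.e. $h_{\mathrm{eq}}/b_1 = (D - h_{\mathrm{eq}})/(t_N - b_2)$; in particular, by Remark~\ref{Rem:Hight}, $E_s(h)$ is strictly decreasing for $h < h_{\mathrm{eq}}$, since raising $h$ pushes the (smaller) pre-busy rate up toward the (larger) post-busy rate.

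The second ingredient is the behaviour of the transmission energy $E_t(h)$, which I would establish by a relaxation argument. For a fixed height the transmission rates are optimized inside the tunnel of Algorithm~\ref{A2}, whose floor is fixed by the data requirements $\{D_n\}$ and whose only binding ceiling is the value $h$ imposed at the busy-interval boundaries through the casualty constraint \eqref{Eq:P1d} (the cumulative sensed data is capped at $h$ throughout $[b_1,b_2]$). Increasing $h$ raises this ceiling while leaving the floor and the terminal requirement $\int_0^{t_N} r\,dt = D$ untouched, so the feasible set of transmission profiles is nested and grows with $h$. As the transmission objective itself is unchanged, the minimum over a larger feasible set can only decrease, and therefore $E_t(h)$ is non-increasing in $h$.

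The conclusion then follows by combining these facts on the feasible range. Because no sensing is possible on $[b_1,b_2]$, every task due by $b_2$ must have its data sensed before $b_1$, which forces $h \geq D_{b_2}$, where $D_{b_2} := \sum_{n:\,t_n \leq b_2} D_n$ is the required data size by the end of the busy time; this is the feasibility lower bound. If $h_{\mathrm{eq}} \geq D_{b_2}$ (so $h_{\mathrm{eq}}$ is feasible), then on $[D_{b_2}, h_{\mathrm{eq}}]$ the sensing term is strictly decreasing while the transmission term is non-increasing, so $E(h)$ is strictly decreasing there and no minimizer can lie below $h_{\mathrm{eq}}$; hence the search may start at $h_{\mathrm{eq}}$. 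If instead $h_{\mathrm{eq}} < D_{b_2}$, then $h_{\mathrm{eq}}$ is infeasible and feasibility alone forces $h \geq D_{b_2}$, so the lower bound is $D_{b_2}$. These two cases are precisely the statement.

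The main obstacle I anticipate is making the transmission-energy monotonicity fully rigorous rather than merely intuitive: one must confirm that $h$ enters the tunnel of Algorithm~\ref{A2} only through the ceiling at $b_1$ and $b_2$ and nowhere else, so that the feasible regions are genuinely nested as $h$ grows and the relaxation argument applies verbatim. A secondary subtlety is that Lemma~\ref{lemma:MinSense} models the sensing energy through the two epochs immediately adjacent to the busy interval; I would either justify that the SP solution of Algorithm~\ref{A1} reduces to single constant rates on $[0,b_1]$ and $[b_2,t_N]$ in the regime considered, or argue that any inner tasks contribute convex, $h$-independent terms so that the $h$-dependence of $E_s$ is still governed by these two boundary rates and the minimizer $h_{\mathrm{eq}}$ is unchanged.
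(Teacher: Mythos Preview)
Your proposal is correct and follows essentially the same approach as the paper: both decompose the argument into sensing-energy monotonicity below $h_{\mathrm{eq}}$ (via Lemma~\ref{lemma:MinSense}) and transmission-energy monotonicity in $h$ (via the casualty ceiling becoming less stringent as $h$ grows), then handle the infeasible case by the data-requirement floor. Your feasible-set nesting/relaxation argument for $E_t$ is a slightly cleaner abstraction of the paper's figure-based convexity calculation, but the underlying logic is identical.
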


\begin{proof}
According to Lemma \ref{lemma:MinSense}, the sensing energy consumption won't be reduced by decreasing the height. As for transmission, since the height represents the amount of sensed data, decreasing the height will make the casualty constraint more stringent. As shown in Fig.~\ref{FigLower}, suppose that the optimal transmission rates are $\{r_1,r_2,r_3\}$ given the height $h_l$. By decreasing the height to $h_l'$, the transmission rates turns to $\{r_1,r_2',r_3'\}$. Due to the convexity of the Shannon capacity, the transmission energy consumption $E_t(r_2)+E_t(r_3) \leq E_t(r_2')+E_t(r_3')$ as $r_2' < r_2 < r_3 < r_3'$. That is to say, the transmission energy consumption won't be reduced by decreasing the height. Therefore, such height is the lower bound of the searching area if it is feasible. If $h_l$ is not feasible, the lower bound is the required data size by the end of the busy time since it is the lowest achievable height towards the minimum sensing energy consumption.
\end{proof}

On the other hand, the upper bound can be determined by optimizing the transmission and sensing rates sequentially, which is given in the following proposition.

\begin{proposition}[Upper Bound of the Searching Area]\label{prop:UpBound} \emph{The upper bound of the searching area is the height that is derived by solely optimizing the transmission rates without considering sensing.}
\end{proposition}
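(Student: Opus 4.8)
The plan is to characterize the transmission-only optimum, show that raising the height above it leaves the transmission energy at its global minimum while strictly increasing the sensing energy, and conclude that no height above it can be optimal.

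First I would solve the transmission subproblem in isolation: minimize $\sum_{i=1}^{N'}\frac{\sigma^2}{g}(e^{r_i/B}-1)T'_i$ subject only to the data transmission requirement \eqref{Eq:P2c}, dropping the casualty constraint \eqref{Eq:P2d} and all sensing terms. By the convexity argument underlying Lemma~\ref{Lemma:RateUn}, this relaxed problem is solved by the SP procedure of Algorithm~\ref{A1} (cf. Remark~\ref{Rem:NoBusy}); denote its rate profile by $r^*(t)$, its cumulative curve by $R^*(t)=\int_{0}^{t}r^*(\tau)\,d\tau$, and its globally minimal transmission energy by $E_t^{\min}$. I then set the candidate upper bound to $h_u := R^*(b_2)$, the amount of data this profile has delivered by the end of the busy interval, which is precisely the height needed for the casualty constraint to admit $r^*(t)$ on $[b_1,b_2]$.

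Next I would prove that for every $h \geq h_u$ the attainable transmission energy is exactly $E_t^{\min}$. Given the height, the feasible transmission set is the tunnel of Lemma~\ref{Lemma:RateLimited} whose ceiling equals $h$ on $[b_1,b_2]$; since $R^*$ is non-decreasing, $R^*(\tau)\leq R^*(b_2)=h_u \leq h$ for $\tau\in[b_1,b_2]$, so $r^*(t)$ already respects this ceiling. Combined with Proposition~\ref{prop:Optfixedh}, which guarantees the joint solution returned by Algorithm~\ref{A3} is casualty-feasible throughout, this shows $r^*(t)$ remains feasible at height $h$; being the unconstrained minimizer, it keeps the transmission energy flat at $E_t^{\min}$ for all $h\geq h_u$.

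Finally, I would invoke Lemma~\ref{lemma:MinSense} together with Remark~\ref{Rem:Hight}: the sensing energy is convex in $h$ with its minimum at the lower-bound height $h_l$ of Proposition~\ref{prop:LowBound}, hence strictly increasing for $h>h_l$. After checking $h_u \geq h_l$, it follows that for any $h>h_u$ the total cost is $E_s(h)+E_t^{\min}$ with $E_s$ strictly increasing, so $E(h)>E(h_u)$ and the optimal height can never exceed $h_u$; this is the claimed upper bound. The hard part will be the feasibility step: showing rigorously that at $h=h_u$ the unconstrained transmission optimum simultaneously satisfies the busy-interval ceiling and the pre-busy casualty relation $R^*(\tau)\leq S(\tau)$ on $[0,b_1]$, so that enlarging the ceiling genuinely leaves the transmission optimum unchanged; a secondary point to pin down is the inequality $h_u\geq h_l$, which is what makes the sensing term monotone over the entire region $h>h_u$.
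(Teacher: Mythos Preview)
Your proposal is correct and follows essentially the same route as the paper: define $h_u$ from the transmission-only SP optimum, argue the transmission energy is flat for $h\geq h_u$, and use the monotonicity of the sensing energy above $h_l$ to conclude. The paper handles your two flagged verification steps exactly as you anticipate---feasibility via Proposition~\ref{prop:Optfixedh}, and $h_u\geq h_l$ via the SP rate ordering $s_1\geq r_1\geq r_2\geq r_3=s_3$ around the busy interval, which forces the pre-busy sensing slope to dominate the post-busy one.
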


\begin{proof}
{As shown in Fig.~\ref{FigUpper}, the optimal transmission rates without taking sensing into consideration has the SP structure with the floor determined by the required data size, and thus can be derived by Algorithm~\ref{A1}. The corresponding height at the end of busy time $b_2$ is $h_u$. According to Lemma \ref{Lemma:RateUn}, the optimal transmission rates won't change by increasing $h_u$ and so does the transmission energy consumption. Given the height $h_u$, the optimal sensing rates before and after the busy time interval can also be derived by Algorithm~\ref{A1}. Due to the casualty constraint, one can get $s_1 \geq r_1$. As for the epoch after the busy time interval, since the point $(b_2,h_u)$ lies on the curve of $r_3$, one can get $r_3 = s_3$ according to Algorithm~\ref{A1}. According to Lemma \ref{Lemma:RateLimited}, one can get $s_1 \geq r_1 \geq r_2 \geq r_3 = s_3$. That is to say, the sensing rate before the busy time is no less than that after busy time, which indicates that the height $h_u \geq h_l$. Therefore, the sensing energy consumption will not be reduced by increasing the height $h_u$. Since the total energy consumption will not be reduced by increasing the height $h_u$, such height is the upper bound.}
\end{proof}

\begin{figure}[t]
\centering
\includegraphics[scale=0.5]{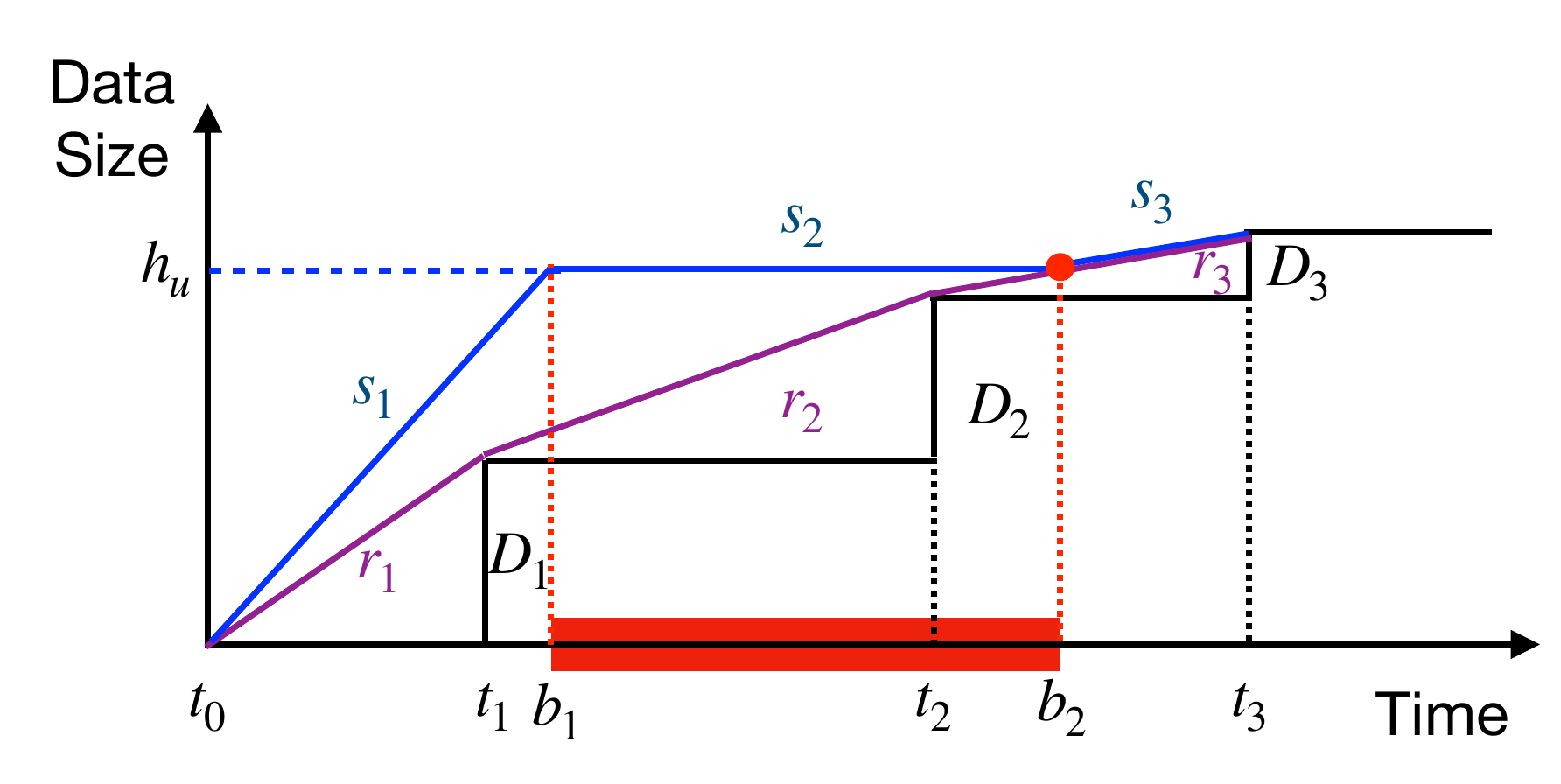}
\caption{The upper bound of searching area for optimal height}
\label{FigUpper}
\end{figure}

Moreover, one can also observe from Fig.~\ref{FigUpper} that the sensing and transmission rate curves are overlapping after the busy time interval. Indeed, such a property is valid accounting for all heights in the searching area as given in the following lemma.

\begin{lemma}[Sensing and Transmission after the Busy Time Interval]\label{lemma:Overlap} \emph{For all heights in the searching area, the sensing and transmission rate curves are overlapped after the busy time interval.}
\end{lemma}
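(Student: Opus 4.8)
The plan is to reduce both the post-busy sensing curve and the post-busy transmission curve to the \emph{same} string-pulling (SP) problem, namely the shortest convex cumulative path from $(b_2,h)$ to $\bigl(t_N,\sum_{n=1}^N D_n\bigr)$ lying above the floor fixed by the task requirements $\{D_n\}$ for $t_n>b_2$; once both curves are shown to solve this identical problem, their overlap is immediate. The sensing side is already in this form by construction in Algorithm~\ref{A3}: after the busy interval the sensing SP starts at cumulative value $h$ at $b_2$, ends at $\sum_{n=1}^N D_n$ at $t_N$, and by Lemma~\ref{Lemma:RateUn} is the unique non-increasing curve that decreases only when the floor is active. Hence the whole argument hinges on identifying the transmission curve's behaviour after $b_2$ with this same curve.

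First I would pin down the starting value of the transmission curve at $b_2$. Writing $R(\tau)=\int_0^\tau r(t)\,dt$ for the cumulative transmission, the casualty constraint \eqref{Eq:P1d} together with $s(t)=0$ on $[b_1,b_2]$ forces $R(b_2)\le h$, i.e. the transmission tunnel has a ceiling of height $h$ at $b_2$. I claim $R(b_2)=h$ for every height in the searching area. To see this, let $\Phi(c)$ denote the minimum transmission energy over $[0,t_N]$ subject to $R(b_2)=c$ and the floor constraints; since the per-epoch cost $\frac{\sigma^2}{g}(e^{r/B}-1)$ is strictly convex, each of the two sub-path value functions on $[0,b_2]$ and $[b_2,t_N]$ is strictly convex in $c$, so $\Phi$ is strictly convex. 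By the very definition of the upper bound in Proposition~\ref{prop:UpBound}, the transmission-only optimum (no ceiling) attains cumulative value $h_u$ at $b_2$, so $\Phi$ is minimised at $c=h_u$ and is strictly decreasing on the range below $h_u$. For any admissible height $h\in[h_l,h_u]$ the ceiling forces $c\le h\le h_u$, and minimising the decreasing function $\Phi$ over this range yields $c=h$; feasibility of $c=h$ against the floor is guaranteed because Proposition~\ref{prop:LowBound} places $h_l$ at or above the data required by the end of the busy interval. Hence $R(b_2)=h$: the ceiling is active at $b_2$ and the transmission curve passes through $(b_2,h)$.

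With the endpoints matched, I would then invoke Lemma~\ref{Lemma:RateLimited} on $[b_2,t_N]$. The tunnel ceiling set in Algorithm~\ref{A3} is finite only at $b_1$ and $b_2$ and equals $+\infty$ afterwards, so the casualty (ceiling) constraint is never active beyond $b_2$; Lemma~\ref{Lemma:RateLimited} then states that the transmission rate can only decrease there, and only when the floor is active. This is exactly the characterisation of the sensing curve supplied by Lemma~\ref{Lemma:RateUn}. Thus on $[b_2,t_N]$ both curves are non-increasing, start at $(b_2,h)$, end at $\bigl(t_N,\sum_{n=1}^N D_n\bigr)$, and decrease only at contacts with the common floor; since a strictly convex cost admits a unique such string-pulling path, the two curves coincide, which is the asserted overlap.

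The main obstacle is the middle step, establishing $R(b_2)=h$ over the whole searching interval rather than only at $h=h_u$. Everything else is a direct application of Lemmas~\ref{Lemma:RateUn} and~\ref{Lemma:RateLimited}, but the claim that the ceiling is active at $b_2$ for all $h\le h_u$ requires the strict convexity of the split-cost function $\Phi$ together with the characterisation of $h_u$ as its unconstrained minimiser; care is also needed to confirm that $c=h$ remains feasible against the floor, which is precisely what the lower bound $h_l$ of Proposition~\ref{prop:LowBound} secures.
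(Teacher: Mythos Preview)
Your proof is correct, and the overall two-step skeleton---first show that the cumulative transmission equals $h$ at $b_2$, then invoke SP uniqueness on $[b_2,t_N]$---matches the paper's. The second step is essentially identical: the paper appeals to Remark~\ref{Rem:NoBusy}, you appeal directly to Lemmas~\ref{Lemma:RateUn} and~\ref{Lemma:RateLimited} and the uniqueness of the string-pulling path, which is the same content.

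Where you genuinely diverge is in the first step. The paper argues that if $R(b_2)<h$ then one could \emph{decrease the height} and lower the sensing energy, ``contradicting optimality''; but since the lemma is stated for a \emph{fixed} height in the search interval, that reasoning as written only pins down $R(b_2)=h$ at the globally optimal height, not for every $h\in[h_l,h_u]$. Your route is different and stronger: you introduce the split transmission cost $\Phi(c)$, observe it is strictly convex with unconstrained minimiser $c=h_u$ (this is exactly the content of Proposition~\ref{prop:UpBound}), and conclude that over the ceiling-constrained set $c\le h\le h_u$ the minimum is attained at $c=h$, with feasibility against the floor secured by $h\ge h_l$. This argument is purely on the transmission side, never touches the sensing energy, and delivers $R(b_2)=h$ for \emph{every} admissible height---precisely what the lemma claims. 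So your approach buys the full generality of the statement, while the paper's shortcut, taken literally, proves only the special case actually needed downstream.
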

\begin{proof}
The first step is to prove that the amount of sensed data should be equal to that of transmitted data by the end of the busy time interval. Otherwise the data that has been sensed should be more than that has been transmitted due to the casualty constraints. In such case, decreasing the height will reduce the sensing energy consumption and thus contradicts the optimality. Therefore, both the sensing and transmission rate curves has the same start and end points after the busy time interval. According to Remark~\ref{Rem:NoBusy}, these two curves are overlapping.
\end{proof} 

\begin{figure}[t] 
\centering
\includegraphics[scale = 0.5]{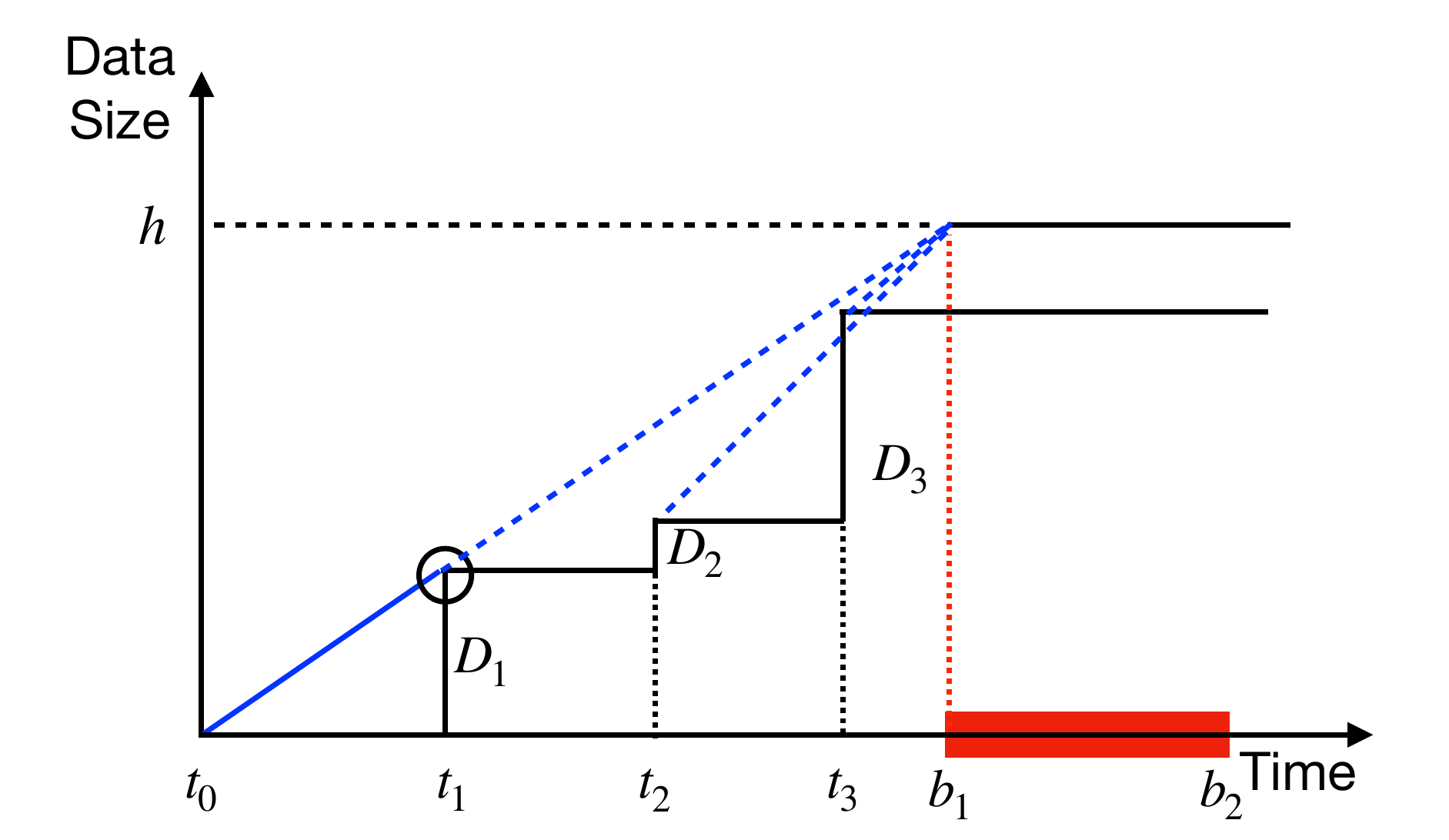}
\caption{Sub-areas determined by critical points}
\label{FigArea}
\end{figure}

\subsection{Searching Area Division}
After determining the searching area, the remaining work is to find the height that enables the optimal sensing and transmission rates. To execute the height search, the changing of total energy consumption with the varying height should be analyzed at first. According to Propositions~\ref{prop:LowBound} and \ref{prop:UpBound}, the sensing or transmission energy consumption will decrease or increase respectively as the height decreases in the searching area. As the total energy consumption is comprised of that both for sensing and transmission, whether it is increasing or decreasing with the change of height is hard to determine. Moreover, the SP structure will also change with the varying height. As shown in Fig.~\ref{FigArea}, the original rates before and after the point $(t_1,D_1)$ will no longer be equal if the height $h$ decreases. The points leading to such change are defined as the \emph{critical points}, which divide the the whole searching area into a series of sub-areas. Once the height drops from one sub-area to another, the expression of total energy consumption will change. Therefore, it is extremely complex to obtain the optimal height through traditional bisectional searching method if not impossible. 

\begin{figure}[t] 
\centering
\includegraphics[scale = 0.5]{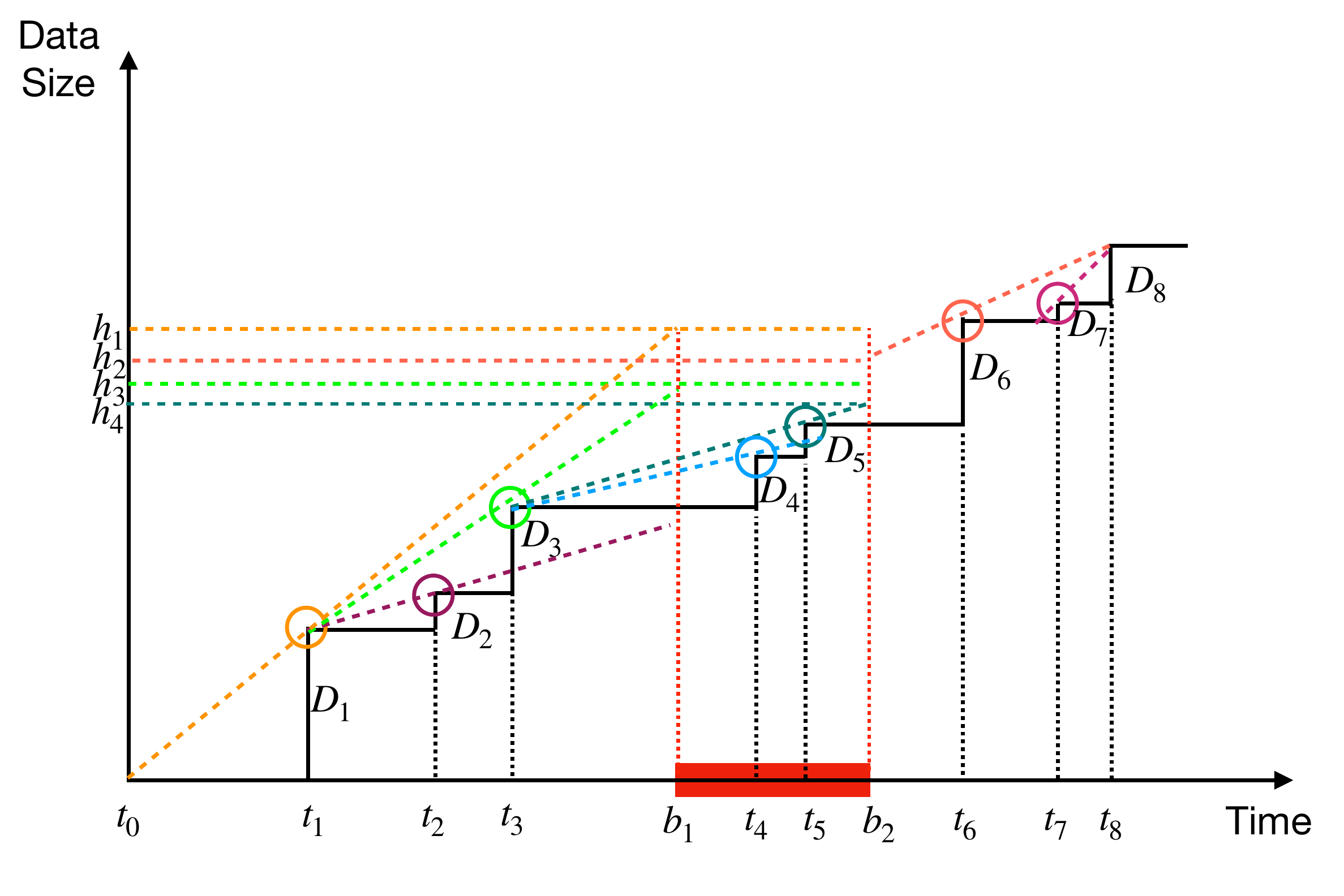}
\caption{Critical points and corresponding heights}
\label{FigCritical}
\end{figure}

To enable efficient searching, all the critical points and corresponding heights need to be found out. As shown in Fig.~\ref{FigCritical}, the critical points are the subset of $\{(t_j,\sum_{n=1}^j D_n)\}$ that might cause the change of SP structure denoted by $\{(t_m,\sum_{n=1}^m D_n)\}$. As for the case before the end of busy time, the point with the largest slope $\max_j \frac{\sum_{n=1}^{j} D_n}{t_j-t_0}$ should be the first critical point. By setting this critical point $(t_m,\sum_{n=1}^{m} D_n)$ as the new start, the next critical point is the point with the largest slope $\max_j \frac{\sum_{n=m+1}^{j} D_n}{t_j-t_m}$. As for the case after the end of busy time, the first critical point is the one with the smallest slope $\min_j \frac{\sum_{n=j+1}^{N} D_n}{t_N-t_j}$ denoted by $(t_m,\sum_{n=1}^{m} D_n)$, and the subsequent critical points can be found by checking the smallest slope $\min_j \frac{\sum_{n=j+1}^{m} D_n}{t_m-t_j}$. Therefore, all the critical points and corresponding heights can be found sequentially by the Algorithm~\ref{A4}. Specifically, as the height decreases, the first critical point is $(t_1,D_1)$ and the next three are $(t_6,\sum_{n=1}^6 D_n)$, $(t_3,\sum_{n=1}^3 D_n)$, and $(t_5,\sum_{n=1}^5 D_n)$. It should be noted that the points $(t_2,\sum_{n=1}^2 D_n)$, $(t_4,\sum_{n=1}^4 D_n)$ and $(t_7,\sum_{n=1}^7 D_n)$ are not critical since otherwise the data requirement constraints will be violated. Once a critical point $(t_m,\sum_{n=1}^{m} D_n)$ before the end of busy time is determined, the rate won't change with the height in duration $[t_0,t_m]$, which is known as the \emph{unchanged interval}. As for the critical point after the busy time interval, the unchanged interval is $[t_m,t_N]$. 

\begin{algorithm}[tt]
\renewcommand{\algorithmicrequire}{\textbf{Input:}}
\renewcommand{\algorithmicensure}{\textbf{Output:}}
\caption{Critical Points and Corresponding Heights Searching Algorithm}
\label{A4}
\begin{algorithmic}[1]
\REQUIRE Required data amounts $\{D_n\}$ at instants $\{t_n\}$ for $N$ tasks, busy time $[b_1,b_2]$
\ENSURE Critical points $\{(t_m,\sum_{n=1}^{m} D_n)\}$ in searching area and corresponding heights $\{h_m\}$.
\STATE Initialize $j_0 = 0$, $i = 0$.
\STATE \textbf{while} $t_{j_i} \leq b_1$
\STATE \qquad Update $i = i+1$.
\STATE \qquad Calculate $j_i = \arg\max_{j:t_{j_{i-1}}<t_j \leq b_1}\left\{\frac{\sum_{n=j_{i-1}+1}^{j}D_n}{t_j-t_{j_{i-1}}}\right\}$. 
\STATE \qquad Calculate $h_{j_i} = (b_1 - t_{j_{i-1}}) \frac{\sum_{n=j_{i-1}}^{j_i} D_n}{t_{j_i}-t_{j_{i-1}}} + \sum_{n=1}^{j_i} D_n$.
\STATE \textbf{end while}
\STATE Let $t_{j_i} = t_0$
\STATE \textbf{while} $t_{j_i} \leq b_2$
\STATE \qquad Update $i = i+1$.
\STATE \qquad Calculate $j_i = \arg\max_{j:t_{j_{i-1}}<t_j \leq b_2}\left\{\frac{\sum_{n=j_{i-1}+1}^{j}D_n}{t_j-t_{j_{i-1}}}\right\}$. 
\STATE \qquad Calculate $h_{j_i} = (b_2 - t_{j_{i-1}}) \frac{\sum_{n=j_{i-1}}^{j_i} D_n}{t_{j_i}-t_{j_{i-1}}} + \sum_{n=1}^{j_i} D_n$.
\STATE \textbf{end while}
\STATE Let $t_{j_i} = t_N$
\STATE \textbf{while} $t_{j_i} \leq t_N$
\STATE \qquad Update $i = i+1$.
\STATE \qquad Calculate $j_i = \arg\min_{j:b_2<t_j \leq t_{j_{i-1}}}\left\{\frac{\sum_{n=j+1}^{j_{i-1}}D_n}{t_{j_{i-1}}-t_j}\right\}$. \STATE \qquad Calculate $h_{j_i} = \sum_{n=1}^{j_i} D_n -  (t_{j_i} - b_2) \frac{\sum_{n=j_i+1}^{j_{i-1}}D_n}{t_{j_{i-1}}-t_{j_i}}$.
\STATE \textbf{end while}
\STATE Sort the heights in decreasing order denoted by $\{h_m\}$.
\STATE \textbf{Return} The critical points $\{(t_m,\sum_{n=1}^{m} D_n)\}$ and corresponding heights $\{h_m\}$.
\end{algorithmic}
\end{algorithm}


\subsection{Height searching}
After finding all the critical points and corresponding heights, the remaining steps are to determine the local optimal height that results in the smallest energy consumption in each sub-area, and get the global optimal height by comparing these local optimal heights. In each sub-area, the heights that might reach local optimum are given in the following proposition.

\begin{proposition}[Local optimal height in each sub-area]\label{prop:change}
\emph{The local optimal height in each sub-area locates either at the boundary of such sub-area or where the derivate of energy consumption with respect to the height equals to zero. }
\end{proposition}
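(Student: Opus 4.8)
The plan is to analyze the structure of the total energy consumption as a function of the height $h$ within a single fixed sub-area, and then invoke standard properties of continuous functions on a closed interval together with convexity.

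First I would fix attention on one sub-area, say $[h_{m+1}, h_m]$, delimited by two consecutive critical heights produced by Algorithm~\ref{A4}. The defining feature of a sub-area (established in the preceding Searching Area Division discussion) is that the SP structure of both the sensing and transmission rate curves is combinatorially fixed throughout the sub-area: no new critical point is crossed, so the identity of the active data-requirement and casualty constraints does not change, and every epoch rate is an explicit affine function of $h$. Concretely, by Remark~\ref{Rem:Hight} the sensing rate before the busy time equals $h/b_1$ (times the appropriate affine map determined by the fixed active segment) and is increasing in $h$, while the sensing rate after the busy time is $(D-h)/(t_N-b_2)$-type and decreasing in $h$; by Lemma~\ref{lemma:Overlap} the transmission rate after the busy interval coincides with the sensing rate there, and the transmission rates before the busy interval are likewise affine in $h$ with fixed breakpoints. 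I would write the total energy $E(h)$ as the finite sum over epochs of $\alpha C^2 s_i^2 T'_i$ and $\frac{\sigma^2}{g}(e^{r_i/B}-1)T'_i$, substituting the affine-in-$h$ expressions for each $s_i$ and $r_i$.

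The key structural claim is then that $E(h)$ restricted to the sub-area is continuously differentiable and convex in $h$. Convexity follows because the sensing terms are compositions of the convex quadratic $x\mapsto\alpha C^2 x^2$ with affine functions of $h$, and the transmission terms are compositions of the convex exponential $x\mapsto\frac{\sigma^2}{g}(e^{x/B}-1)$ with affine functions of $h$; a nonnegative sum of convex functions is convex. This is exactly the convexity of the objective that was already exploited to justify the SP structure. A convex, differentiable function on a closed interval $[h_{m+1},h_m]$ attains its minimum either at a point where $E'(h)=0$ (the unconstrained stationary point, if it lies inside the sub-area) or at one of the two endpoints $h_{m+1}$, $h_m$ (the boundary of the sub-area). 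This is precisely the dichotomy asserted in the proposition, so the statement follows.

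The main obstacle I anticipate is not the convexity argument itself but justifying the claim that the rate expressions remain affine in $h$ with an unchanging active-constraint pattern across the whole interior of a sub-area — i.e., that crossing no critical height really does freeze the SP combinatorics. I would handle this by appealing to the definition of critical points from the Searching Area Division section: the critical heights are exactly the values of $h$ at which an SP rate curve merges or splits at some $(t_j,\sum_{n=1}^j D_n)$, so between two consecutive critical heights the set of active data-requirement constraints (Lemma~\ref{Lemma:RateUn}) and active casualty constraints (Lemma~\ref{Lemma:RateLimited}) is constant, guaranteeing that each $s_i(h)$ and $r_i(h)$ is a single affine piece. A secondary technical point is continuity of $E$ across sub-area boundaries, which I would note in passing (the merging of two curves into one at a critical height leaves the total data, and hence the rates, continuous), so that comparing the finitely many local optima to obtain the global optimum is legitimate.
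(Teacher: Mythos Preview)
Your proposal is correct and follows essentially the same approach as the paper: both fix a sub-area, observe that the SP combinatorics freeze so that the relevant rates are affine in $h$, and then conclude via convexity of $E(h)$ that the minimum lies either at an interior stationary point or at a boundary. The only difference is presentational---the paper writes out the explicit four-term expression for the changing part of $E(h)$ (two sensing segments adjacent to $[b_1,b_2]$ and two transmission segments adjacent to $b_2$, using Lemma~\ref{lemma:Overlap}) and verifies $d^2E/dh^2>0$ by direct computation, whereas you invoke the composition-of-convex-with-affine rule; the logical structure is identical.
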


\begin{proof}
According to the definition, the rates in unchanged interval won't change with the heights and so does the corresponding energy consumption. As for sensing, consider one critical point before $b_1$ denoted by $(t_{m_1},\sum_{n=1}^{m_1}D_n)$ and one after $b_2$ denoted by $(t_{m_2},\sum_{n=1}^{m_2}D_n)$. The changing part of sensing energy consumption w.r.t. the height $h$ in the sub-area determined by these points can be expressed as $E_s(h) = \alpha C^2 [(\frac{h-\sum_{n=1}^{m_1}D_n}{b_1-t_{m_1}})^2  (b_1-t_{m_1})+ (\frac{\sum_{n=1}^{m_2}D_n - h}{t_{m_2}-b_2})^2 (t_{m_2}-b_2)]$. As for transmission, the critical point before $b_2$ is denoted by $(t_{m_3},\sum_{n=1}^{m_3}D_n)$, while that after $b_2$ should be the same one as $(t_{m_2},\sum_{n=1}^{m_2}D_n)$ since the curves of transmission and sensing are overlapped according to Lemma~\ref{lemma:Overlap}. The changing part of transmission energy consumption w.r.t the height $h$ in this sub-area is $E_t(h) = \frac{\sigma^2}{g}[\exp(\frac{h-\sum_{n=1}^{m_3}D_n}{(b_2-t_{m_3})B}) (b_2-t_{m_3}) + \exp(\frac{\sum_{n=1}^{m_2}D_n - h}{(t_{m_2}-b_2)B}) (t_{m_2}-b_2) - 2]$. Therefore, the changing part of total energy consumption $E(h) =  E_s(h)+E_t(h) = \alpha C^2 [(\frac{h-\sum_{n=1}^{m_1}D_n}{b_1-t_{m_1}})^2 (b_1-t_{m_1}) + (\frac{\sum_{n=1}^{m_2}D_n - h}{t_{m_2}-b_2})^2 (t_{m_2}-b_2)] + \frac{\sigma^2}{g}[\exp(\frac{h-\sum_{n=1}^{m_3}D_n}{(b_2-t_{m_3})B}) (b_2-t_{m_3}) + \exp(\frac{\sum_{n=1}^{m_2}D_n - h}{(t_{m_2}-b_2)B}) (t_{m_2}-b_2) - 2]$. One can get $\frac{dE(h)}{dh} = 2 \alpha C^2 [\frac{h-\sum_{n=1}^{m_1}D_n}{b_1-t_{m_1}} - \frac{\sum_{n=1}^{m_2}D_n-h}{t_{m_2}-b_2}] + \frac{\sigma^2}{gB}[\exp(\frac{h-\sum_{n=1}^{m_3}D_n}{(b_2-t_{m_3})B})- \exp(\frac{\sum_{n=1}^{m_2}D_n - h}{(t_{m_2}-b_2)B})]$ and $\frac{d^2E(h)}{dh^2} = 2 \alpha C^2[\frac{1}{b_1-t_{m_1}} + \frac{1}{t_{m_2}-b_2}] + \frac{\sigma^2}{gB^2}[\exp(\frac{h-\sum_{n=1}^{m_3}D_n}{(b_2-t_{m_3})B})/(b_2-t_{m_3}) + \exp(\frac{\sum_{n=1}^{m_2}D_n - h}{(t_{m_2}-b_2)B})/(t_{m_2}-b_2)] > 0$. If the solution of $\frac{dE(h)}{dh} = 0$ lies in the sub-area, then it is the local optimal height. Otherwise the local optimal height lies in the boundary of this sub-area.
\end{proof}

\begin{algorithm}[tt]
\renewcommand{\algorithmicrequire}{\textbf{Input:}}
\renewcommand{\algorithmicensure}{\textbf{Output:}}
\caption{Global Optimal Rates Searching Algorithm}
\label{A5}
\begin{algorithmic}[1]
\REQUIRE Required data amounts $ \{D_n\} $ at instants $\{t_n\}$ for $N$ tasks, busy time $[b_1,b_2]$.
\ENSURE  The global optimal height $h^*$, sensing rates $s^*(t)$, and transmission rates $r^*(t)$. 
\STATE  Determine the searching area according to Propositions~\ref{prop:LowBound} and \ref{prop:UpBound}.
\STATE  Find all critical points $\{(t_m,\sum_{n=1}^m D_n)\}$ and corresponding heights $\{h_m\}$ by Algorithm~\ref{A4}.
\STATE  Obtain the local optimal height $\{h_m^*\}$ in each sub-area determined by the critical points according to Proposition~\ref{prop:change}.
\STATE  Get the global optimal height $h^* = \arg\min_{h_m^*} E(h_m^*)$.
\STATE Get the optimal sensing and transmission rates $s^*(t)$ and $r^*(t)$ by Algorithm~\ref{A3} based on $h^*$.
\STATE \textbf{Return} the global optimal height $h^*$, sensing rates $s^*(t)$, and transmission rates $r^*(t)$.
\end{algorithmic}
\end{algorithm}

Based on Proposition~\ref{prop:change}, all local optimal heights in each sub-area can be found. After comparing the corresponding total energy consumption of the local optimal heights from different sub-areas, one can get the global optimal height. The optimal sensing and transmission rates can then be obtained by Algorithm~\ref{A3}. The whole process is summarized in Algorithm \ref{A5}.

\section{Extension for finite data buffer capacity case}
Inspired by the policy in the previous section, the joint data sensing and transmission rates control for the case with finite data buffer capacity of the server is investigated in this section. Constrained by the maximum capacity $D_{max}$ of the buffer, the size of stored data at the server should obey the buffer constraints, i.e.,
\begin{equation} \label{TxBuf}
\text{(transmission data buffer)} \sum_{i=1}^j r_i T'_i  - \sum_{i=1}^{j-1} D'_i \leq D_{max}, j = 1,...N', 
\end{equation}
where $D_0 = 0$ represents that no data is required at instant $t_0$. After incorporating the data buffer constraints, the optimization problem can be formulated as
\begin{subequations}
\begin{align}
\min_{\{r_i \geq 0\},\{s_i \geq 0\}}  ~
&  \sum_{i=1}^{N'} \left [\alpha C^2 s_i^2+\frac{\sigma^2}{g}(e^{r_i/B}-1)\right]T'_i \label{Eq:P3a}\\
\text{(P3)} \qquad \text{s.t.} \qquad
& \sum_{i=1}^j s_i T'_i \geq \sum_{i=1}^j D'_i, j = 1,...N',\label{Eq:P3b}\\
& \sum_{i=1}^j r_i T'_i\geq \sum_{i=1}^j D'_i, j = 1,...N',\label{Eq:P3c}\\
& \sum_{i=1}^j r_i T'_i  - \sum_{i=1}^{j-1} D'_i \leq D_{max}, j = 1,...N',\label{Eq:P3d}\\
& \sum_{i=1}^j s_i T'_i \geq \sum_{i=1}^j r_i T'_i, j = 1,...N',\label{Eq:P3e}\\
& s_i = 0, \forall t'_i \in [b_1,b_2]. \label{Eq:P3f}
\end{align}
\end{subequations}
To solve problem (P3), the height search algorithm is modified as elaborated below.

\begin{algorithm}[tt]
\renewcommand{\algorithmicrequire}{\textbf{Input:}}
\renewcommand{\algorithmicensure}{\textbf{Output:}}
\caption{SP Algorithm for Optimal Rate Control with Finite Data Buffer.}
\label{A6}
\begin{algorithmic}[1]
\REQUIRE  required data amounts $\{D_n\}$ at instants $\{t_n\}$ for $N$ tasks, data buffer capacity $D_{\max}$. 
\ENSURE the optimal transmission rates $\{r_i^*\}$ and durations $\{T_i^*\}$.
\STATE Initialize $u_b = 0$, $u_1 = 0$, $i = 0$.
\STATE \textbf{while} $N > 0$
\STATE \qquad Update $i = i+1$.
\STATE \qquad \textbf{for} $j = 1,...,N$
\STATE \qquad \qquad $r_e[j] = \frac{\sum_{n=0}^j D_n}{t_j}$,
\STATE \qquad \qquad $r_f[j] = \frac{\sum_{n=0}^{j-1} D_n + D_{\max}}{t_j}$,
\STATE \qquad \qquad $\bold{r}[j] = [r_e[j],r_f[j]] = \{r|r_e[j] \leq r \leq r_f[j]\}$.
\STATE \qquad \textbf{end for}
\STATE \qquad Update $u_b = \max\left\{u|\bigcap_{j=1}^u\bold{r}[j]\neq \emptyset, j=1,2,...,N\right\}$.
\STATE \qquad \textbf{if} $u_b = N$ 
\STATE \qquad \qquad Update $u_1 = \max\left\{u|r_e[u] \in \bigcap_{j=1}^{u_b}\bold{r}[j]\right\}$, $r_i^* = r_e[u_1]$, $T_i^* = t_{u_1}$.
\STATE \qquad \textbf{else} 
\STATE \qquad \qquad \textbf{if} $\bold{r}[u_b+1]$ falls below $\bigcap_{j=1}^{u_b}\bold{r}[j]$
\STATE \qquad \qquad \qquad Update $u_1 = \max\left\{u|r_e[u] \in \bigcap_{j=1}^{u_b}\bold{r}[j]\right\}$, $r_i^* = r_e[u_1]$, $T_i^* = t_{u_1}$.
\STATE \qquad \qquad \textbf{else}
\STATE \qquad \qquad \qquad Update $u_1 = \max\left\{u|r_f[u] \in \bigcap_{j=1}^{u_b}\bold{r}[j]\right\}$, $r_i^* = r_f[u_1]$, $T_i^* = t_{u_1}$.
\STATE \qquad \qquad \textbf{end if}
\STATE \qquad \textbf{end if}
\STATE \qquad Update $N = N - u_1$, $t_n = t_{n+u_1} - t_{u_1}$, $D' = r_i^* T_i^* - \sum_{n=0}^{u_1} D_n$.
\STATE \qquad Update $D_n = D_{n+u_1}$, $D_1 = D_1 - D'$.
\STATE \textbf{end while}
\STATE \textbf{Return} the optimal transmission rates $\{r_i^*\}$ and durations $\{T_i^*\}$.
\end{algorithmic}
\end{algorithm}

\subsection{Searching Area Division for Determining Local Optimal Height}
Due to the finite data buffer capacity, the original upper and lower bounds derived in the previous section might not hold. Specifically, the upper bound is no longer determined by SP of transmission rate curve above the floor via Algorithm \ref{A1}, but should be determined by the one accounting for finite data buffer capacity in \cite{li2021data} as summarized in Algorithm \ref{A6}, which finds the feasible region determined by the data buffer capacity and the corresponding constant rates in a recursive manner.

As shown in Fig. \ref{FigLimUp}, the upper bound might be higher than the buffer capacity constraint. Moreover, due to Lemma \ref{Lemma:RateLimited}, the rate might increase when achieving the buffer capacity constraint. Therefore, the lower bound might be higher than the upper bound as the sensing rate before the busy time interval might be less than that after it. In this special case, the optimal height is given in the proposition below.

\begin{proposition}[Optimal height in Special Case]\label{prop:OptSpe}
\emph{If the lower bound is higher than the upper bound, the optimal height locates exactly at the lower bound. }
\end{proposition}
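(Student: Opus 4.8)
The plan is to write the total energy as $E(h)=E_s(h)+E_t(h)$, following the decomposition used in the proof of Proposition~\ref{prop:change}, and to show that under the hypothesis $h_l>h_u$ the transmission term is frozen at its minimum over the whole admissible range of heights, so that minimizing $E(h)$ collapses to minimizing the sensing term, whose minimizer is $h_l$. First I would collect the two monotonicity facts already in hand. From Lemma~\ref{lemma:MinSense} and the reasoning in Proposition~\ref{prop:LowBound}, the sensing energy $E_s(h)$ attains its global minimum at the lower bound $h_l$; moreover, in the present regime this $h_l$ must be the \emph{balanced-rate} height rather than the data-requirement height, because the latter would force $h_u\ge h_l$ (the transmission floor already requires the cumulative transmitted data at $b_2$ to meet the requirement due by $b_2$), contradicting $h_l>h_u$. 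Next, from the definition of the upper bound through Algorithm~\ref{A6}, $h_u$ equals the cumulative transmitted data at $b_2$ obtained by optimizing transmission subject only to the floor and the buffer constraint; since the height enters transmission solely as a ceiling on the cumulative curve across $[b_1,b_2]$ through the casualty constraint, enlarging $h$ only relaxes this ceiling, so $E_t(h)$ is non-increasing in $h$ and is constant at its buffer-constrained minimum $E_t^{\min}$ for every $h\ge h_u$.

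Then I would exploit the hypothesis directly. Because $h_l>h_u$, every feasible height $h\ge h_l$ also obeys $h>h_u$, so $E_t(h)=E_t^{\min}$ on the entire range $[h_l,\infty)$; consequently $E(h)=E_s(h)+E_t^{\min}$ there, which is minimized exactly where $E_s$ is, giving $E(h)\ge E(h_l)$ for all $h\ge h_l$. For the complementary side, I would invoke Proposition~\ref{prop:LowBound}: lowering the height below $h_l$ neither reduces the sensing energy (we have passed to the left of the sensing minimum) nor reduces the transmission energy (a lower ceiling only tightens the casualty constraint), hence $E(h)\ge E(h_l)$ for $h<h_l$ as well. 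Combining the two inequalities yields $h^*=h_l$.

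The step I expect to be the main obstacle is justifying the flatness of $E_t$ for $h\ge h_u$ once the buffer is finite. In the infinite-buffer setting this rested on the plain non-increasing SP structure (Lemma~\ref{Lemma:RateUn}), whereas with a finite buffer the transmission rate may \emph{increase} when the buffer constraint binds (Lemma~\ref{Lemma:RateLimited}); this is exactly what allows $h_u$ to fall below $h_l$ and produce the special case. I would therefore verify, from the buffer-aware construction in Algorithm~\ref{A6}, that the transmission curve it returns has cumulative value $h_u$ at $b_2$ and, being monotone, stays weakly below $h_u$ throughout $[b_1,b_2]$; then any ceiling $h\ge h_u$ leaves that curve feasible, so it remains optimal and $E_t(h)=E_t^{\min}$. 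With this point secured, the remaining comparisons are the elementary ones above.
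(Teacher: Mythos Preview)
Your proposal is correct and follows essentially the same approach as the paper's proof, which simply observes that the sensing energy attains its minimum at $h_l$ while the transmission energy remains unchanged as the height varies between $h_u$ and $h_l$. Your version is a more careful elaboration of the same two-line argument: you make explicit the decomposition $E(h)=E_s(h)+E_t(h)$, separately treat the ranges $h\ge h_l$ and $h<h_l$, and justify the flatness of $E_t$ above $h_u$ in the finite-buffer setting---details the paper leaves implicit.
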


\begin{proof}
The minimum sensing energy consumption is achieved at the lower bound, while the transmission energy consumption remains unchanged when the height is moved from the upper bound to the lower bound. 
\end{proof}

\begin{figure}[t] 
\centering
\includegraphics[scale = 0.5]{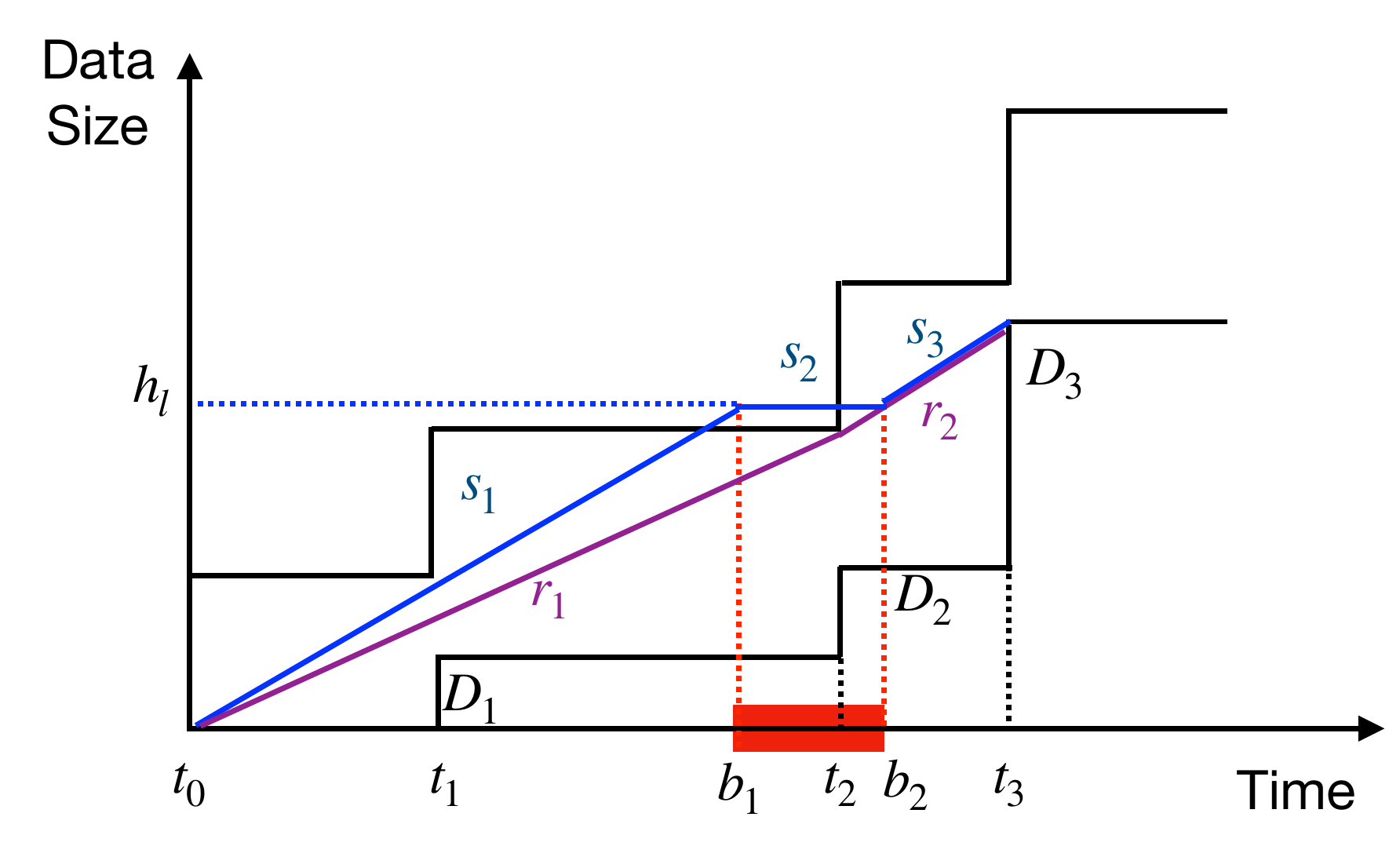}
\caption{Violated upper bound in the finite data buffer capacity case}
\label{FigLimUp}
\end{figure}

As shown in Fig. \ref{FigLimLow}, the lower bound derived in Proposition \ref{prop:LowBound} might also be higher than the buffer capacity constraint. In this case, if the upper bound is higher than the lower bound, then the optimal height still exists between the two bounds, while the transmission rate is constrained by the lower one of the optimal height or the buffer capacity constraint.

\begin{figure}[t] 
\centering
\includegraphics[scale = 0.5]{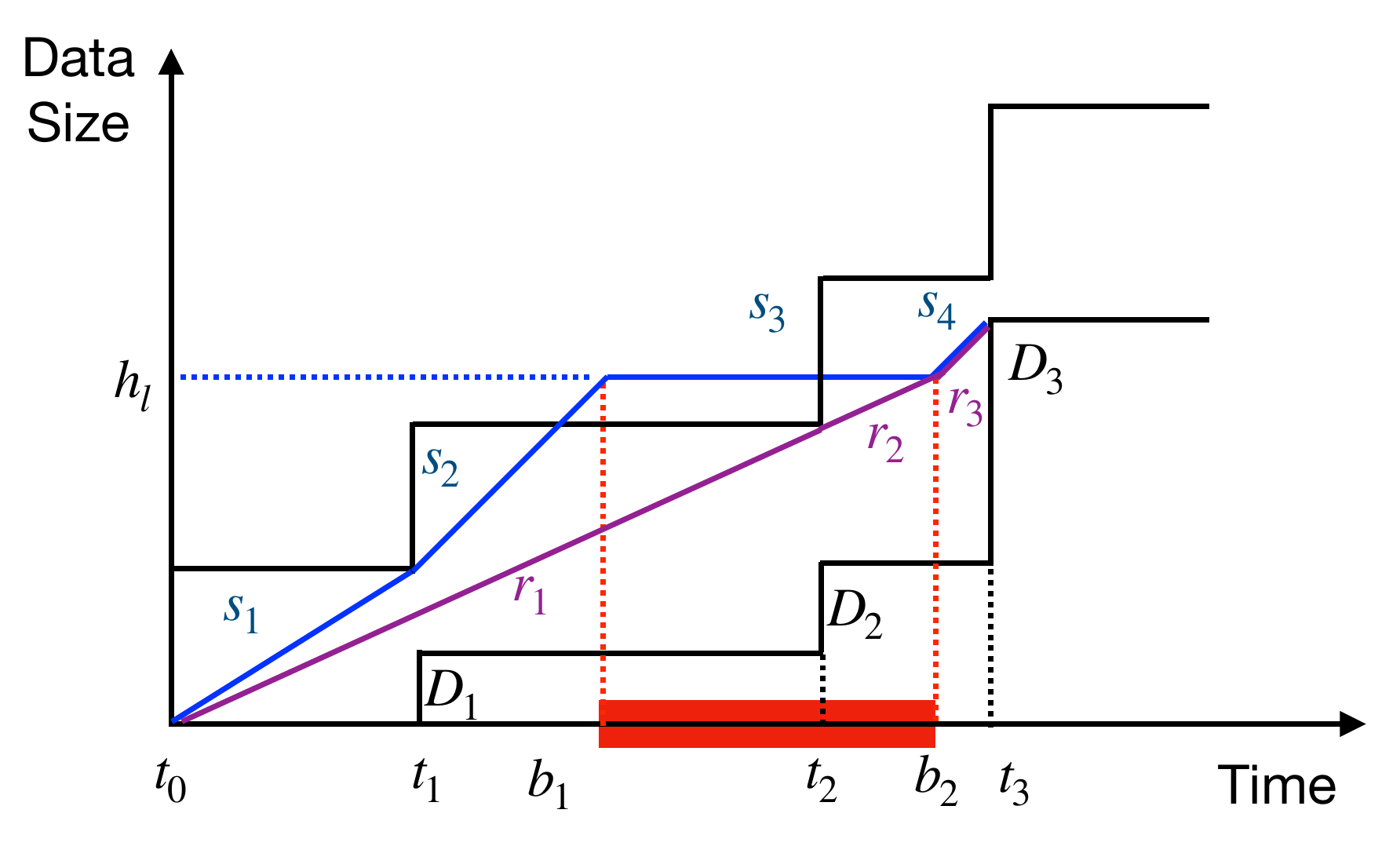}
\caption{Violated lower bound in the finite data buffer capacity case}
\label{FigLimLow}
\end{figure}

To execute the height searching, the changing of total energy consumption with the varying height should be analyzed at first. The data buffer constraints might cause the change of the SP structure with the varying height. For example, the sensing rates before and after the point $(t_3,D_3+D_{\max})$ become equal when the height decreases from $h_1$ to $h_2$ as shown in Fig. \ref{FigLimCritical}. The points leading to such changes are also known as the \emph{critical points}. Another type of critical point such as $(t_6,D_6)$ results in non-equal rates before and after it as defined in the previous section. The corresponding height at which the critical point is valid is defined as the \emph{critical height}. The critical heights divide the whole searching area into a series of sub-areas. 

\begin{figure}[t] 
\centering
\includegraphics[scale = 0.5]{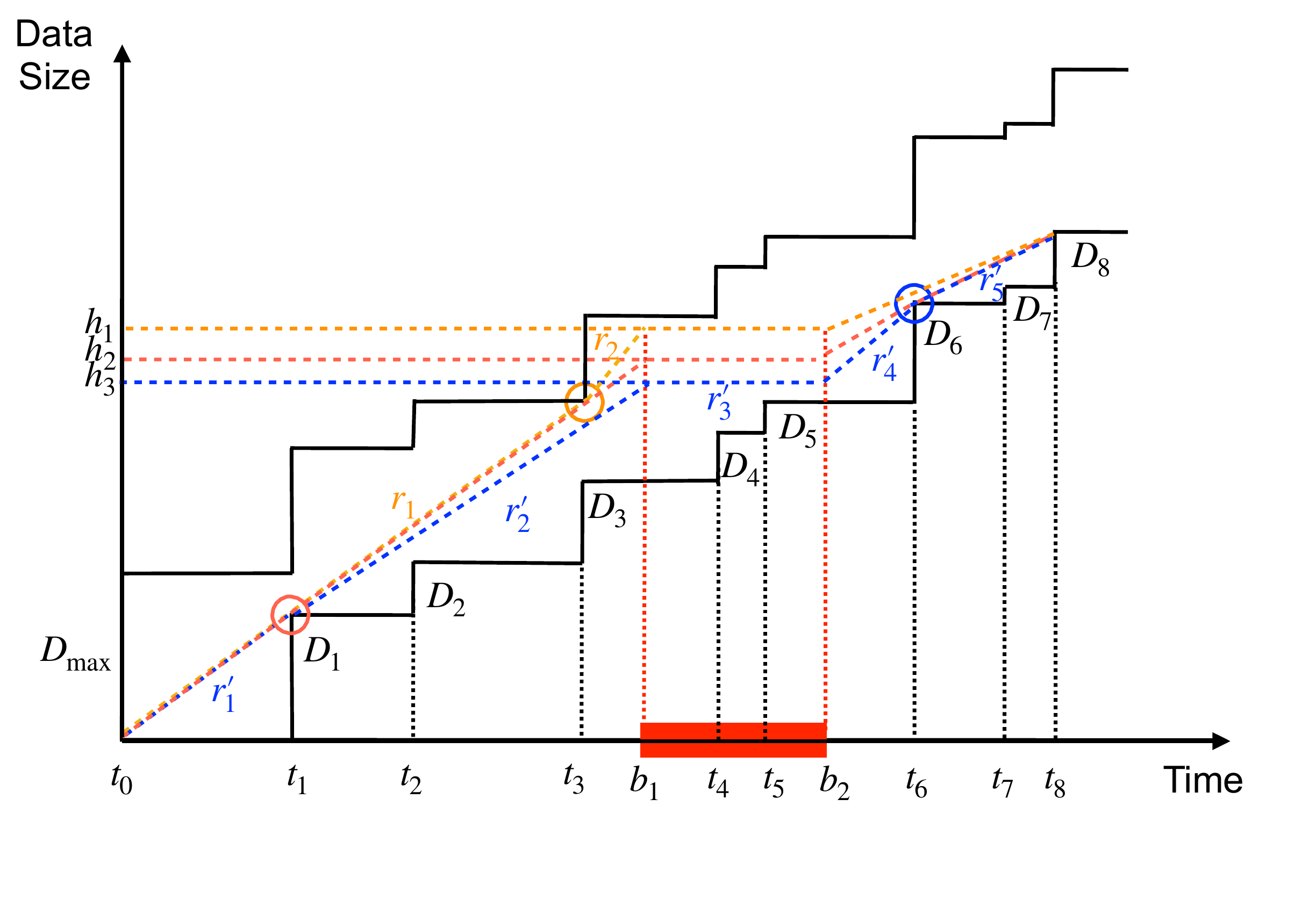}
\caption{Critical points and corresponding heights in the finite data buffer capacity case}
\label{FigLimCritical}
\end{figure}

The algorithm for finding the critical heights is summarized in Algorithm \ref{A7} and described below. The algorithm starts from a given height $h$, based on which the next critical height is found by comparing the slopes of the rate curves. A new variable namely busy time instant $b$ is introduced to represent one of $b_1$ and $b_2$. The slope of the non-zero rate curve that connects the point $(b,h)$ is defined as the nearest rate, while the non-zero rate curve that connects the nearest rate curve is defined as the second nearest rate curve. The time interval in which the rate remains as a constant is called the constant rate interval. For example, the nearest rate of point $(b_1,h_1)$ in Fig. \ref{FigLimCritical} is $r_2$ with the constant rate interval $[t_3,b_1]$, and the second nearest rate is $r_1$ with the constant rate interval $[t_0,t_3]$. Similarly, the nearest rate of point $(b_2,h_3)$ is $r_4'$ with the constant rate interval $[b_2,t_6]$, and the second nearest rate is $r_5'$ with the constant rate interval $[t_6,t_8]$. It should be noted that to find the next critical height, the Algorithm \ref{A7} need to be executed four times w.r.t. both the sensing and transmission rates before and after the busy time interval. The next critical height is the largest one among the four heights. All the critical heights are found by executing Algorithm \ref{A7} iteratively from the upper bound to the lower bound.


\begin{algorithm}[tt]
\renewcommand{\algorithmicrequire}{\textbf{Input:}}
\renewcommand{\algorithmicensure}{\textbf{Output:}}
\caption{Iterative Algorithm for Finding the Next Critical Height}
\label{A7}
\begin{algorithmic}[1]
\REQUIRE Required data amounts $ \{D_n\} $ at instants $\{t_n\}$ of $N$ tasks, busy time end $b$, height $h$, nearest rate $r_1$ and corresponding constant rate interval of $h: [t_{m_1},...,t_{n_1}]$, second-nearest rate $r_2$ and corresponding constant rate interval $h: [t_{m_2},...,t_{n_2}]$.
\ENSURE The next critical height $h$ in searching region and constant rate interval.
\STATE \textbf{if} $b \geq t_{n_1}$
\STATE \qquad Compute the slope $\frac{\sum_{n=m_1+1}^i D_n}{t_i-t_{m_1}}$ for every $t_i$ in $[t_{m_1 +1},...,t_{n_1}]$.
\STATE \qquad Find the point in $ [t_{m_1+1},...,t_{n_1}] $ with $\max_k{\frac{\sum_{n=m_1+1}^i D_n}{t_i-t_{m_1}}}$ denoted by $t_k$.
\STATE \qquad \textbf{if} $r_{1} < r_{2} $
\STATE \qquad \qquad Set $w = \frac{\sum_{m_1+1}^k D_n}{t_k - t_{m_1}}$.
\STATE \qquad \textbf{else if}  $r_{1} > r_{2} $
\STATE \qquad \qquad Set $w = \max\{\frac{\sum_{m_1+1}^k D_n}{t_k - t_{m_1}},r_{2}\}$. If $r_{2}$ is larger, set $k = m_2$.
\STATE \qquad \textbf{end if}
\STATE \qquad Get $h_{new} = w(b  - t_k)  + \sum_{n=0}^{k} D_n$, and the constant rate interval $[t_k,...,t_{n_1}] $.
\STATE \textbf{else if} $b \leq t_{m_1}$
\STATE \qquad Compute the slope $\frac{\sum_{n=i+1}^{n_1} D_n }{t_{n_1} - t_i}$ for every $t_i$ in $[t_{m_1},...,t_{n_1-1}]$.
\STATE \qquad Find the point in $ [t_{m_1},...,t_{n_1-1}] $ with $\min_k{\frac{\sum_{n=i+1}^n D_n }{t_{n_1} - t_i} }$ denoted by $ t_k $.
\STATE \qquad \textbf{if} $r_{1} > r_{2} $
\STATE \qquad \qquad Set $w = \frac{\sum_{k+1}^{n_1} D_n}{t_{n_1} - t_{k}}$.
\STATE \qquad \textbf{else if}  $r_{1} < r_{2} $
\STATE \qquad \qquad Set $w = \min\{\frac{\sum_{k+1}^{n_1} D_n}{t_{n_1} - t_{k}},r_{k+1}\}$. If $r_{2}$ is smaller, then set $k= n_2$.
\STATE \qquad \textbf{end if}
\STATE \qquad Get $ h_{new} = \sum_{n=0}^{k}D_n - w(t_k  - b)$, and the constant rate interval $[t_{m_1},...,t_{k}]$.
\STATE \textbf{end if}
\STATE \textbf{Return} The next critical height $h_{new}$, and the constant rate interval.
\end{algorithmic}
\end{algorithm}

As the expression of energy consumption does not change with the height in each sub-area, the local optimal height in each sub-area can be also obtained based on Proposition \ref{prop:change}.


\subsection{Optimal Rates based on Given Height}
Given a certain height, Algorithm \ref{A8} is proposed to determine the sensing and transmission rates. Specifically, the sensing rates before and after busy time interval are determined by Algorithm \ref{A1} as they are irrelevant with the data buffer constraints. Nevertheless, the transmission rates are constrained by both the finite data buffer capacity and the sensing rates. Therefore, the Algorithm \ref{A2} is applied to obtain the transmission rates with the maximum data size in each epoch settled as the lower one of the height and the data buffer constraint. Following the similar proof of Proposition \ref{prop:Optfixedh}, Algorithm \ref{A8} yields the optimal and feasible sensing and transmission rates given the fixed height $h$. The whole process for finding the optimal sensing and transmission rates is summarized in Algorithm \ref{A9}.

\begin{algorithm}[tt]
\renewcommand{\algorithmicrequire}{\textbf{Input:}}
\renewcommand{\algorithmicensure}{\textbf{Output:}}
\caption{SP Algorithm for Obtaining Optimal Rates with Finite Buffer given Fixed Height.}
\label{A8}
\begin{algorithmic}[1]
\REQUIRE Required data amounts $\{D_n\}$ at instants $\{t_n\}$ of  $N$ learning tasks, busy time interval $[b1,b2]$, height $h$, and data buffer capacity $D_{max}$
\ENSURE the optimal transmission rates $\{r_i^*\}$ and durations $\{{T_i^{r}}^*\}$, the optimal sensing rates $ \{s_i^*\} $ and durations $\{{T_i^{s}}^*\}$.
\STATE Initialize $n_1 = \max \{n|t_n < b_1\}$, $n_2 = \min\{n|t_n > b_2\}$, $n_3 = \max\{n|t_n < b_2\}$
\STATE Set $\{D_n^1\} = \{D_0,D_1,D_2,...,D_{n_1},h\}$ at instants $\{t_n^1\} = \{t_0,t_1,t_2,...,t_{n_1},b_1\}$
\STATE Set $\{D_n^2 \} = \{h,D_{n_2}-h,D_{n_2+1}-h,...,D_n-h\}$ at instants $\{t_n^2\} = \{b_2,t_{n_2},t_{n_2 + 1},...,t_n\}$
\STATE Given $\{D_n^1\},\{t_n^1\}$ and $\{D_n^2\},\{t_n^2\},D_{max}$, find the optimal sensing rates $\{s_i^1\}$, $\{s_i^2\}$ and durations $\{T_i^1\}$, $\{T_i^2\}$ by Algorithm \ref{A1}.
\STATE Get the optimal sensing rates $\{s_i ^*\} = \{\{s_i^1\},0,\{s_i^2\}\}$ and durations $\{{T_i^{s}}^*\} = \{\{T_n ^1\},b_2 - b_1,\{T_n ^2\}\}$.
\STATE Obtain $\{t_m\}$ by sorting the instants $\{\{t_n\},b_1,b_2\}$ in increasing order. 
\STATE Set $A(t_m) = \min\{h,\sum_{n=0}^{n_1}D_n + D_{max}\}$ if $t_m = b_1$, $A(t_m) = \min\{h,\sum_{n=0}^{n_3}D_n + D_{max}\}$ if $t_m = b_2 $. Otherwise, $ A(t_m) = \sum_{n=0}^{n-1}D_n + D_{max}$ if $t_m = t_n$.
\STATE Find the optimal transmission rates $\{r_i^*\}$ and durations $\{{T_i^{r}}^*\}$ by Algorithm \ref{A2}.
\STATE \textbf{Return} $\{r_i^*\}, \{T_i ^*\},\{s_i ^*\},\{T_i^{**}\}$
\end{algorithmic}
\end{algorithm}

\begin{algorithm}[tt]
\renewcommand{\algorithmicrequire}{\textbf{Input:}}
\renewcommand{\algorithmicensure}{\textbf{Output:}}
\caption{Global Optimal Rates Searching Algorithm in Finite Data Buffer Capacity Case}
\label{A9}
\begin{algorithmic}[1]
\REQUIRE Required data amounts $ \{D_n\} $ at instants $\{t_n\}$ of $N$ learning tasks, busy time $[b_1,b_2]$, data buffer capacity $D_{\max}$.
\ENSURE The global optimal height $h^*$, sensing rates $s^*(t)$, and transmission rates $r^*(t)$. 
\STATE Determine the lower and upper bounds $h_l$ and $h_u$. 
\STATE \textbf{if} $h_l \geq h_u$
\STATE \qquad Get the global optimal height $h^* = h_l$.
\STATE \textbf{else}
\STATE  \qquad Find all critical heights $\{h_m\}$ by executing Algorithm~\ref{A7} iteratively.
\STATE  \qquad Obtain the local optimal height $\{h_m^*\}$ in each sub-area according to Proposition~\ref{prop:change}.
\STATE \qquad  Get the global optimal height $h^* = \arg\min_{h_m^*} E(h_m^*)$.
\STATE \textbf{end if}
\STATE Get the optimal sensing and transmission rates $s^*(t)$ and $r^*(t)$ by Algorithm~\ref{A8}.
\STATE \textbf{Return} the global optimal height $h^*$, sensing rates $s^*(t)$, and transmission rates $r^*(t)$.
\end{algorithmic}
\end{algorithm}

\section{Simulation Results}
This section provides simulation results to evaluate the performance of the proposed algorithms. Each point in the figures is obtained by averaging over 100 simulation realizations, with independent channels in each realization. According to the settings in LTE \cite{schwarz2013lte}, the channels are under Rayleigh fading with bandwidth $B = 10$ MHz and noise power $\sigma^2 = -79.5$ dBm. Following the settings in \cite{you2018exploiting} and \cite{you2016energy}, the required number of CPU cycles per bit is $C = 500$ cycles/Bit, and the constant determined by the circuits is $\alpha = 10^{-28}$. Despite our proposed joint design
of joint sensing and transmission rate control denoted by JSTRC, three benchmark schemes are considered for performance comparison. The UB and LB schemes set the upper or lower bound on the searching area as the height and obtains the corresponding rates respectively, while the sensing and transmission rates are derived based on a random choosen height in the RH scheme.



\subsection{Joint Sensing and Transmission Rates Control with Infinite Data Buffer Capacity}
In the case with infinite data buffer capacity, there are $5$ tasks to be executed at time instants $\{10,20,80,90,200\}$ s with the required amount of data $\{500,500,500,700,300\}$ bits. The busy time interval is $[55,85]$ s. The specific sensing and transmission rates based on JSTRC scheme are depicted in Fig. \ref{FigUnSP}. One can observe that the SP structure holds for the optimal sensing and transmission rates control, which is in accordance with the theoretical analysis.

\begin{figure}[t] 
\centering
\includegraphics[scale = 0.5]{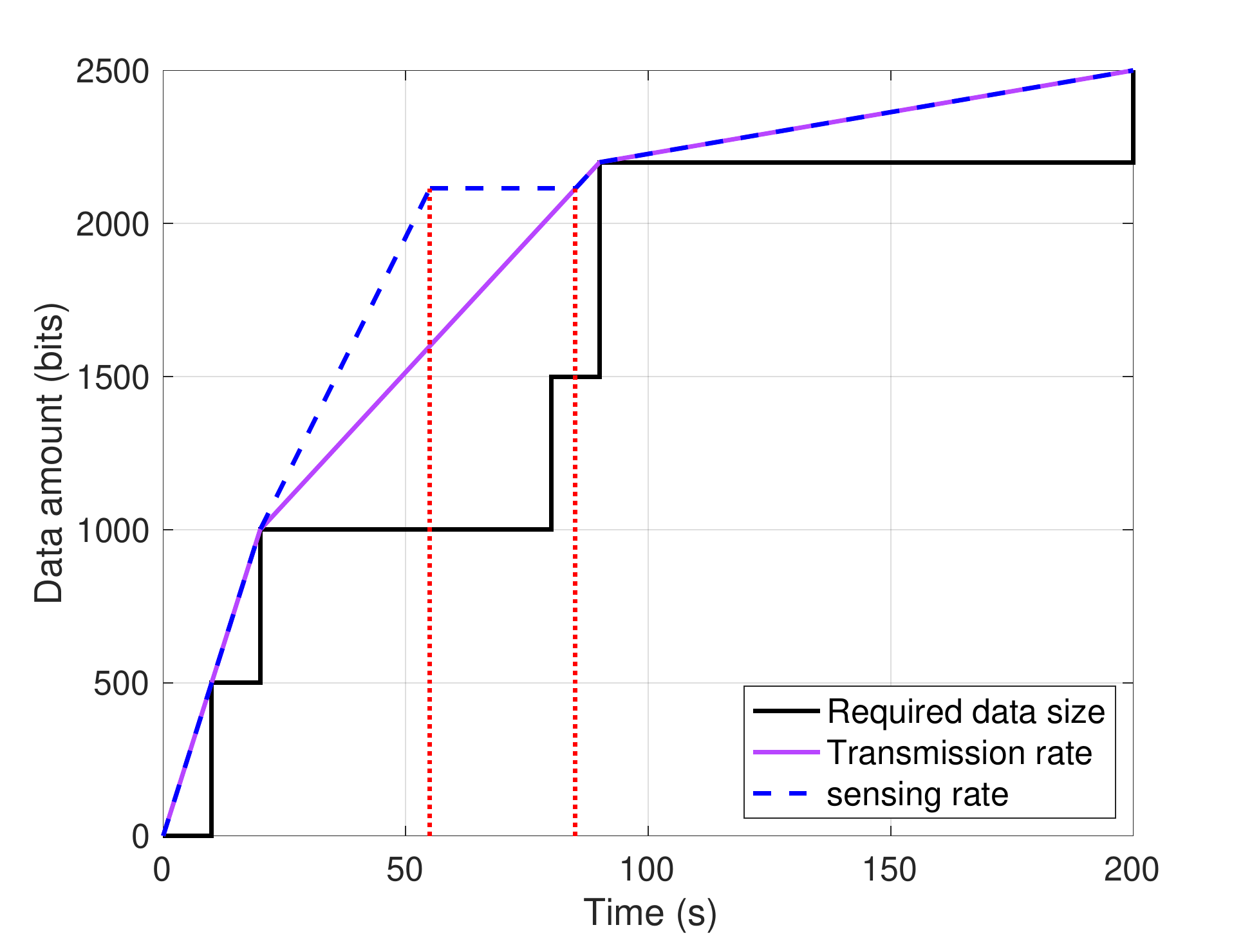}
\caption{Optimal data sensing and transmission rates in the infinite buffer capacity case}
\label{FigUnSP}
\end{figure}

The total energy consumption versus the total amount of required data by all tasks is illustrated in Fig. \ref{FigUnData}, where the amount of data required by each task increases proportionally. It can be observed that the total energy consumption will increase with the increasing amount of required data, as sensing and transmitting more data will result in higher energy consumption. Moreover, our proposed JSTRC scheme has the best performance comparing with other three baseline schemes, which verifies the effectiveness of height searching. It should also be noted that the LB scheme has the worst performance. The reason is that the optimal height is closer to the upper bound than the lower bound as the transmission process with exponential power-rate relationship dominates the total energy consumption.

\begin{figure}[t] 
\centering
\includegraphics[scale = 0.5]{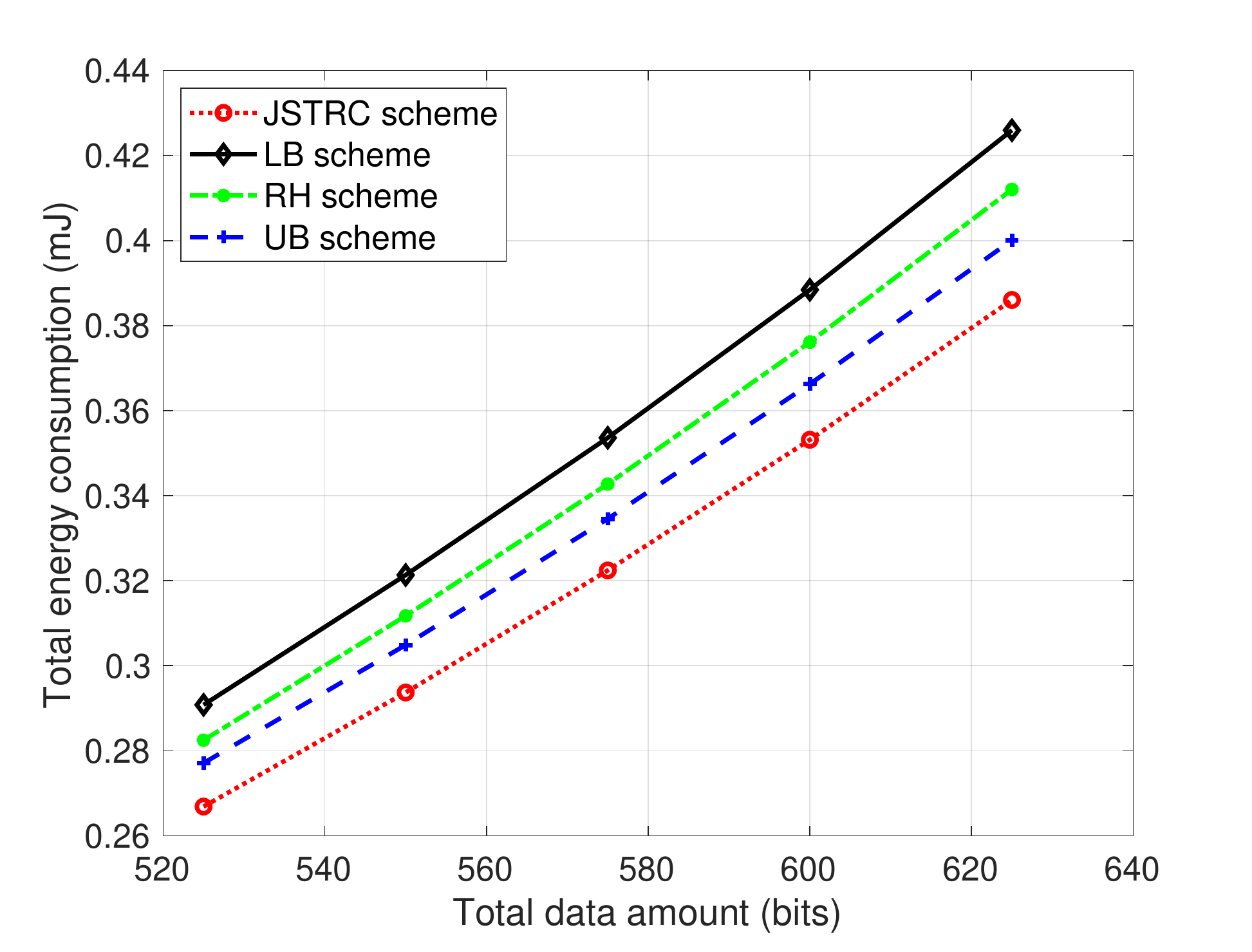}
\caption{Energy consumption versus the total data amount in the infinite buffer capacity case}
\label{FigUnData}
\end{figure}

Fig. \ref{FigUnTime} further demonstrates the total energy consumption versus the total time duration for all tasks, where the delay tolerance of each task increases proportionally. It can be observed that the total energy consumption decreases with the increasing time duration, as enlarging the duration can save the energy consumption for sensing and delivering the same amount of data. 

\begin{figure}[t] 
\centering
\includegraphics[scale = 0.5]{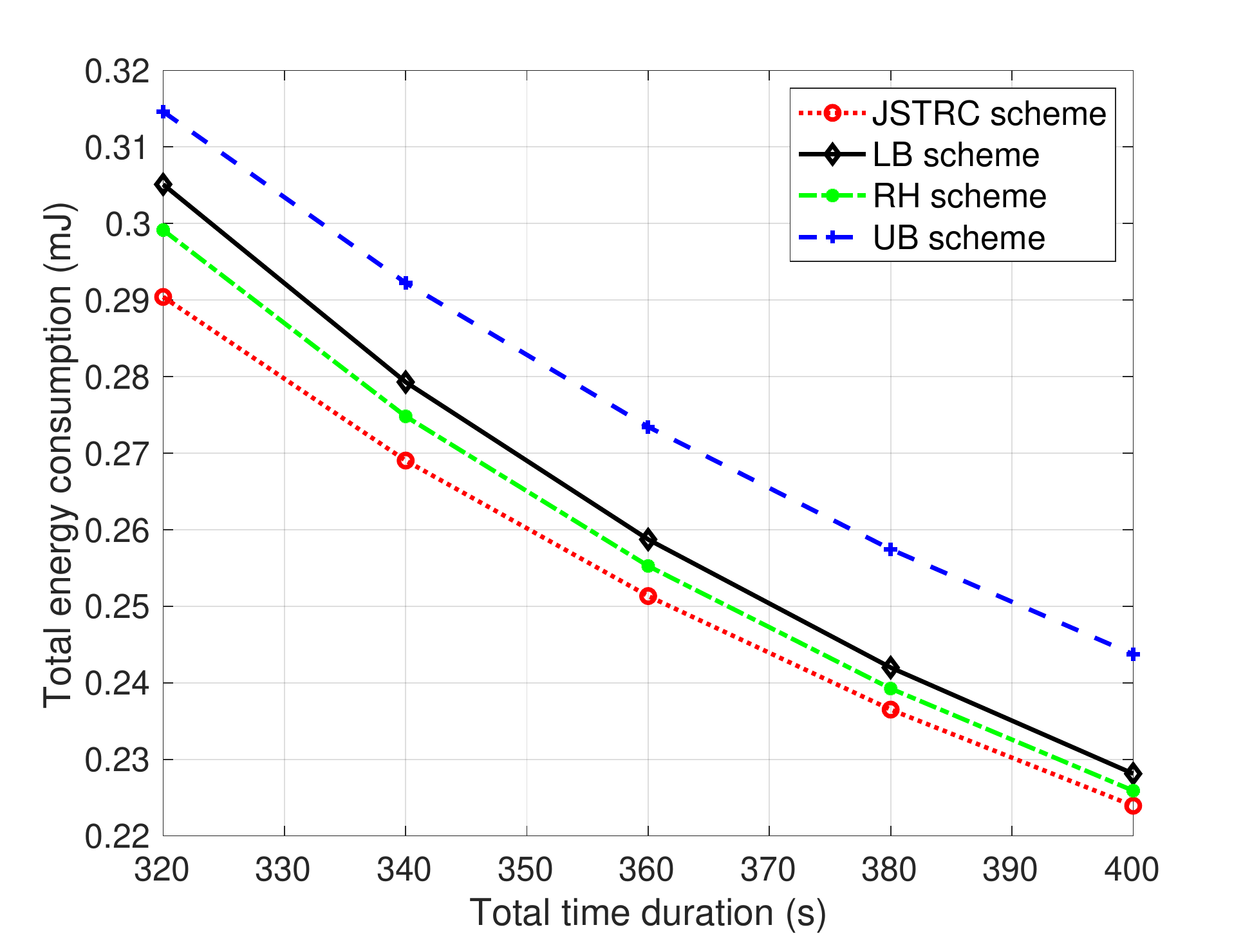}
\caption{Energy consumption versus the total time duration in the infinite buffer capacity case}
\label{FigUnTime}
\end{figure}



\subsection{Joint Sensing and Transmission Rates Control with Finite Data Buffer Capacity}
In the case with finite data buffer capacity, there are $5$ tasks to be executed at time instants $\{10,20,80,90,200\}$ s with the required amount of data $\{500,500,500,700,300\}$ bits. The data buffer capacity is $D_{\max} = 1000$ bits. The busy time interval is $[55,85]$ s. The specific sensing and transmission rates based on JSTRC scheme are depicted in Fig. \ref{FigLimitSP}. One can observe that the SP structure still holds for the optimal sensing and transmission rates control in the case with finite data buffer capacity.

\begin{figure}[t] 
\centering
\includegraphics[scale = 0.5]{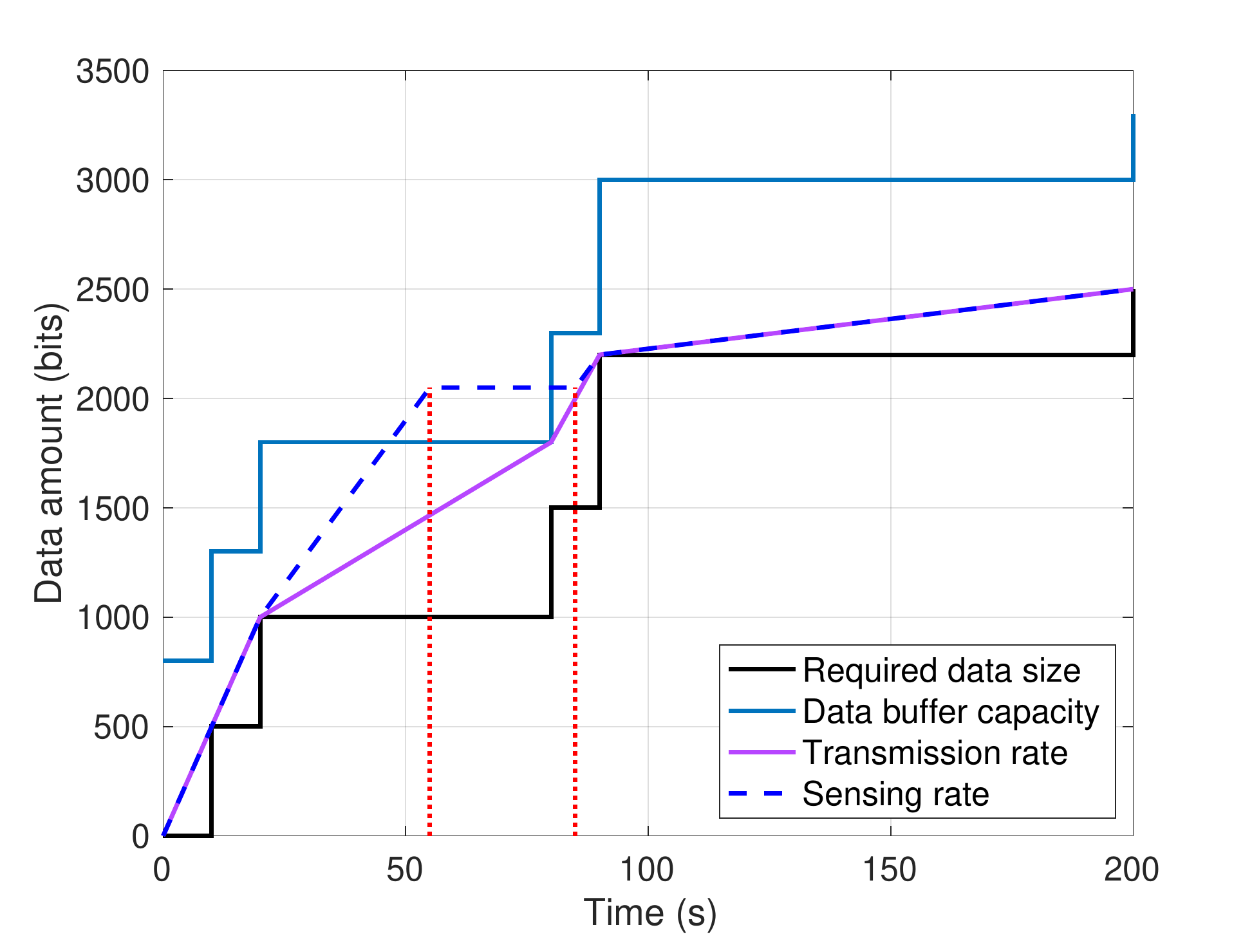}
\caption{Optimal data sensing and transmission rates in the finite buffer capacity case}
\label{FigLimitSP}
\end{figure}

The total energy consumption versus the total amount of required data in the case with finite data buffer capacity is illustrated in Fig. \ref{FigLimitData}, where the amount of data required by each task increases proportionally. It can be observed that our proposed JSTRC scheme still has the best performance. Moreover, the energy consumption in this case is larger than that with infinite data buffer capacity, since the finite data buffer capacity will add extra constraints on the original problem and makes the original optimum infeasible. 

\begin{figure}[t] 
\centering
\includegraphics[scale = 0.5]{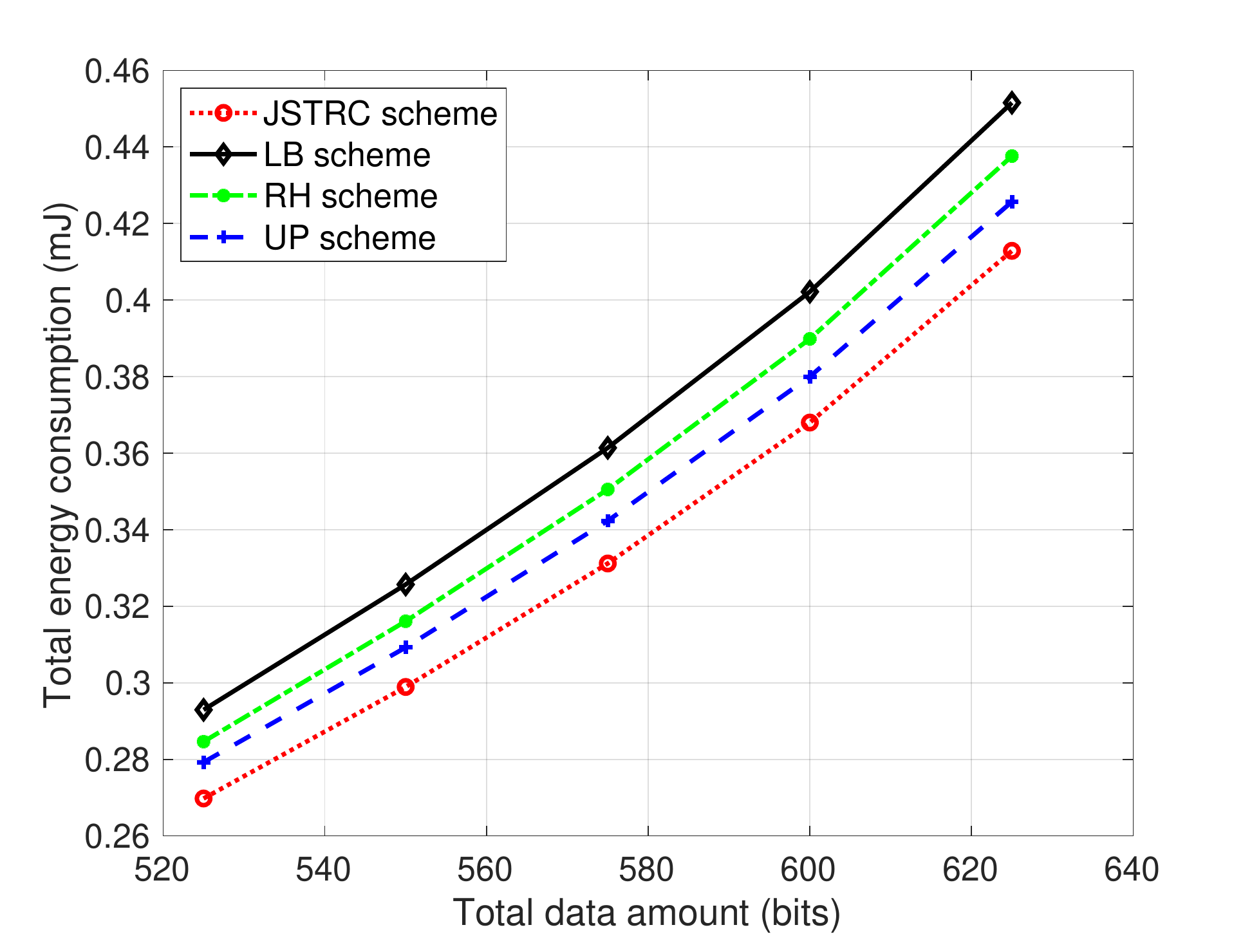}
\caption{Energy consumption versus the total data amount in the finite buffer capacity case}
\label{FigLimitData}
\end{figure}

Fig. \ref{FigLimitTime} demonstrates the total energy consumption versus the total time duration for all tasks in the case with finite data buffer capacity. It can be observed that our proposed JSTRC scheme still has the best performance, while the performance of LB scheme is close to optimal when the total time duration $T \geq 380$ s. The reason is that in such case the optimal height is close to the lower bound of the searching area. 

\begin{figure}[t] 
\centering
\includegraphics[scale = 0.5]{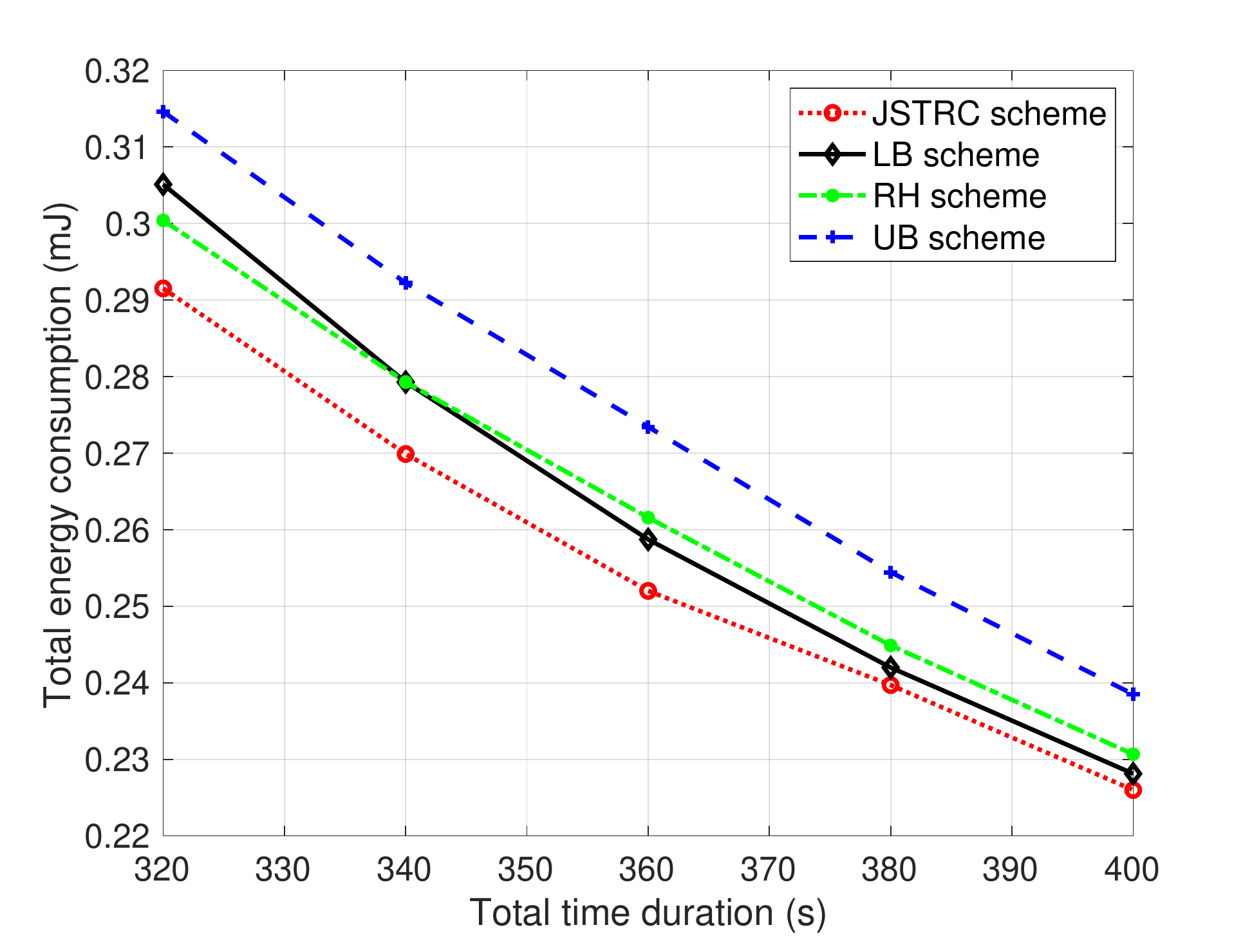}
\caption{Energy consumption versus the total time duration in the finite buffer capacity case}
\label{FigLimitTime}
\end{figure}

To show the effect of data buffer capacity on the performance, the total energy consumption versus the data buffer capacity is further depicted in Fig. \ref{FigBuffer}. It can be observed that our proposed JSTRC scheme has the minimum energy consumption, which is first decreasing with the increasing data buffer capacity and then saturates. The reason is that the data buffer constraints become less stringent as the data buffer capacity increases. Moreover, the energy consumption of the UB scheme first decreases and the increases with the increasing data buffer capacity. That is because the decreasing trend of transmission energy consumption is dominated at the beginning, while the increasing trend of sensing energy consumption becomes dominated as the data buffer capacity increases.

\begin{figure}[t] 
\centering
\includegraphics[scale = 0.5]{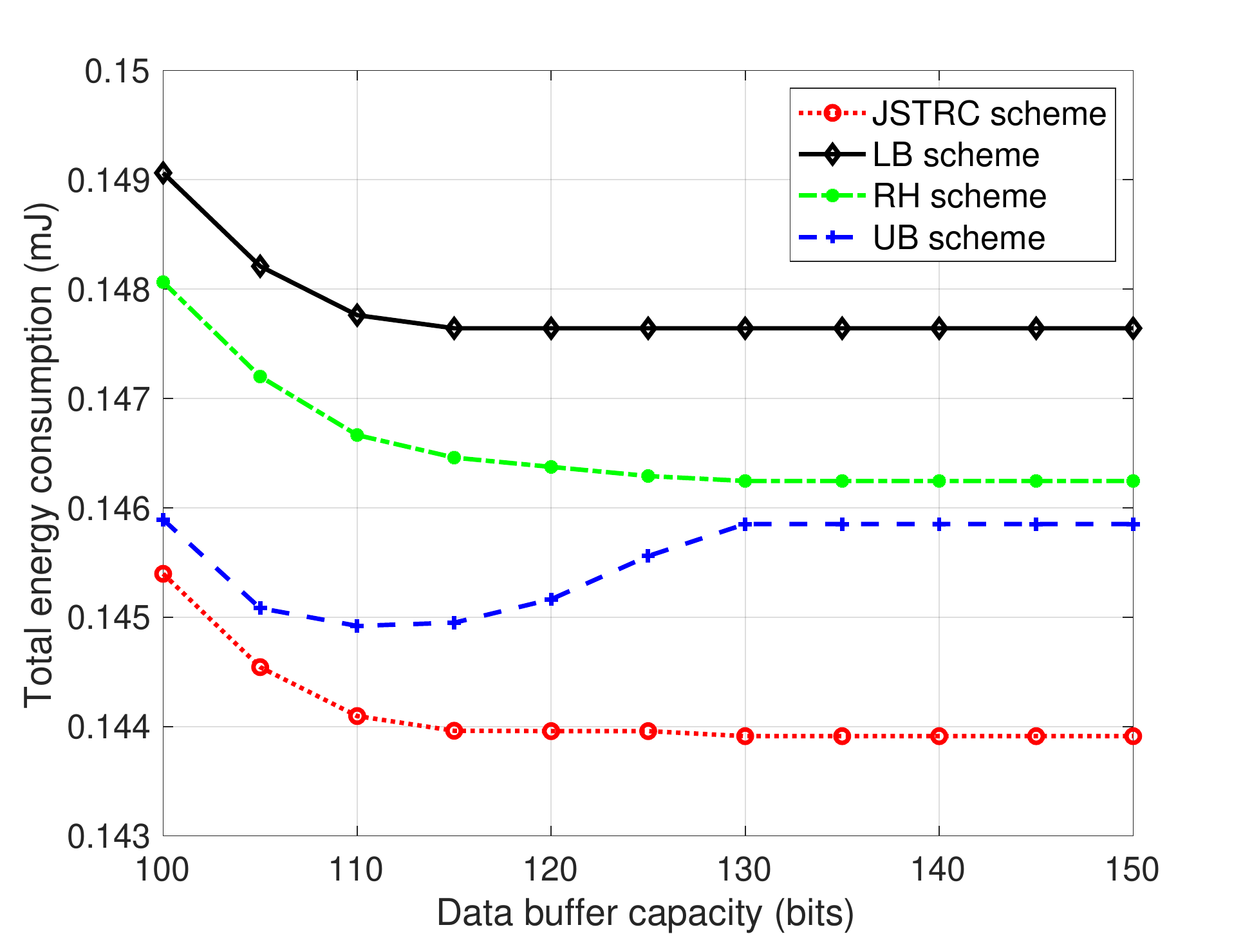}
\caption{Energy consumption versus the data buffer capacity}
\label{FigBuffer}
\end{figure}

\section{Conclusion}
In this paper, we investigate the design of joint sensing and transmission rates control for energy efficient mobile crowd sensing. The joint design is formulated as a complex optimization problem due to the intrinsic coupling between the controlling variables as well as the existence of busy time. To deal with such problem, the SP algorithms are exploited to determine the optimal sensing and transmission rates given the fixed height and an efficient algorithm is proposed to obtain the optimal height. Such a solution approach is further extended to account for the case with finite data buffer capacity at the server. This work opens a new direction for energy efficient joint sensing and transmission rates control. The performance of the schemes is verified via simulations. The current design can be extended for more complex scenarios with time-varying channels and multiple sensors.

\bibliographystyle{IEEEtran}

\begin{thebibliography}{}
\providecommand{\url}[1]{#1}
\csname url@samestyle\endcsname
\providecommand{\newblock}{\relax}
\providecommand{\bibinfo}[2]{#2}
\providecommand{\BIBentrySTDinterwordspacing}{\spaceskip=0pt\relax}
\providecommand{\BIBentryALTinterwordstretchfactor}{4}
\providecommand{\BIBentryALTinterwordspacing}{\spaceskip=\fontdimen2\font plus
\BIBentryALTinterwordstretchfactor\fontdimen3\font minus
  \fontdimen4\font\relax}
\providecommand{\BIBforeignlanguage}[2]{{%
\expandafter\ifx\csname l@#1\endcsname\relax
\typeout{** WARNING: IEEEtran.bst: No hyphenation pattern has been}%
\typeout{** loaded for the language `#1'. Using the pattern for}%
\typeout{** the default language instead.}%
\else
\language=\csname l@#1\endcsname
\fi
#2}}
\providecommand{\BIBdecl}{\relax}
\BIBdecl

\end{thebibliography}


\begin{thebibliography}{10}

\bibitem{chiang2016fog}
M.~Chiang and T.~Zhang, ``Fog and {IoT}: {An} overview of research
  opportunities,'' \emph{IEEE Internet Things J.}, vol.~3, no.~6, pp. 854--864,
  2016.

\bibitem{ma2014opportunities}
H.~Ma, D.~Zhao, and P.~Yuan, ``Opportunities in mobile crowd sensing,''
  \emph{IEEE Commun. Mag.}, vol.~52, no.~8, pp. 29--35, 2014.

\bibitem{akyildiz2002wireless}
I.~F. Akyildiz, W.~Su, Y.~Sankarasubramaniam, and E.~Cayirci, ``Wireless sensor
  networks: {A} survey,'' \emph{Comput. netw.}, vol.~38, no.~4, pp. 393--422,
  2002.

\bibitem{ganti2011mobile}
R.~K. Ganti, F.~Ye, and H.~Lei, ``Mobile crowdsensing: {Current} state and
  future challenges,'' \emph{IEEE Commun. Mag.}, vol.~49, no.~11, pp. 32--39,
  2011.

\bibitem{carpenter2002differentiated}
B.~E. Carpenter and K.~Nichols, ``Differentiated services in the {Internet},''
  \emph{Proc. IEEE}, vol.~90, no.~9, pp. 1479--1494, 2002.

\bibitem{zhang2014providing}
C.~Zhang, P.~Fan, K.~Xiong, and Y.~Dong, ``Providing differentiated services in
  multiaccess systems with and without queue state information,'' \emph{IEEE
  Trans. Commun.}, vol.~62, no.~12, pp. 4387--4400, 2014.

\bibitem{tao2008resource}
M.~Tao, Y.-C. Liang, and F.~Zhang, ``Resource allocation for delay
  differentiated traffic in multiuser {OFDM} systems,'' \emph{IEEE Trans.
  Wireless Commun.}, vol.~7, no.~6, pp. 2190--2201, 2008.

\bibitem{popovski20185g}
P.~Popovski, K.~F. Trillingsgaard, O.~Simeone, and G.~Durisi, ``{5G} wireless
  network slicing for {eMBB, URLLC, and mMTC: A} communication-theoretic
  view,'' \emph{IEEE Access}, vol.~6, pp. 55\,765--55\,779, 2018.

\bibitem{alsenwi2019embb}
M.~Alsenwi, N.~H. Tran, M.~Bennis, A.~K. Bairagi, and C.~S. Hong,
  ``{eMBB-URLLC} resource slicing: {A} risk-sensitive approach,'' \emph{IEEE
  Commun. Lett.}, vol.~23, no.~4, pp. 740--743, 2019.

\bibitem{li2018wirelessly}
X.~Li, C.~You, S.~Andreev, Y.~Gong, and K.~Huang, ``Wirelessly powered crowd
  sensing: {Joint} power transfer, sensing, compression, and transmission,''
  \emph{IEEE J. Sel. Areas Commun.}, vol.~37, no.~2, pp. 391--406, 2018.

\bibitem{prabhakar2001energy}
B.~Prabhakar, E.~U. Biyikoglu, and A.~El~Gamal, ``Energy-efficient transmission
  over a wireless link via lazy packet scheduling,'' in \emph{Proc. IEEE
  Infocom}, Anchorage, AK, USA, Aug. 2001.

\bibitem{zafer2005calculus}
M.~A. Zafer and E.~Modiano, ``A calculus approach to minimum energy
  transmission policies with quality of service guarantees,'' in \emph{Proc.
  IEEE Infocom}, Miami, FL, USA, Aug. 2005.

\bibitem{zafer2008optimal}
M.~A. Zafer and E.~Modiano, ``Optimal rate control for delay-constrained data transmission over a
  wireless channel,'' \emph{IEEE Trans. Inf. Theory}, vol.~54, no.~9, pp.
  4020--4039, 2008.

\bibitem{zafer2009calculus}
M.~A. Zafer and E.~Modiano, ``A calculus approach to energy-efficient data transmission with
  quality-of-service constraints,'' \emph{IEEE/ACM Trans. Netw.}, vol.~17,
  no.~3, pp. 898--911, 2009.

\bibitem{wang2013energy}
X.~Wang and Z.~Li, ``Energy-efficient transmissions of bursty data packets with
  strict deadlines over time-varying wireless channels,'' \emph{IEEE Trans.
  Wireless Commun.}, vol.~12, no.~5, pp. 2533--2543, 2013.

\bibitem{yang2011optimal}
J.~Yang and S.~Ulukus, ``Optimal packet scheduling in an energy harvesting
  communication system,'' \emph{IEEE Trans. Commun.}, vol.~60, no.~1, pp.
  220--230, 2011.

\bibitem{tutuncuoglu2012optimum}
K.~Tutuncuoglu and A.~Yener, ``Optimum transmission policies for battery
  limited energy harvesting nodes,'' \emph{IEEE Trans. Wireless Commun.},
  vol.~11, no.~3, pp. 1180--1189, 2012.

\bibitem{devillers2012general}
B.~Devillers and D.~G{\"u}nd{\"u}z, ``A general framework for the optimization
  of energy harvesting communication systems with battery imperfections,''
  \emph{J. Commun. Netw.}, vol.~14, no.~2, pp. 130--139, 2012.

\bibitem{gurakan2013energy}
B.~Gurakan, O.~Ozel, J.~Yang, and S.~Ulukus, ``Energy cooperation in energy
  harvesting communications,'' \emph{IEEE Trans. Commun.}, vol.~61, no.~12, pp.
  4884--4898, 2013.

\bibitem{ozel2013optimal}
O.~Ozel, J.~Yang, and S.~Ulukus, ``Optimal transmission schemes for parallel
  and fading {Gaussian} broadcast channels with an energy harvesting
  rechargeable transmitter,'' \emph{Computer commun.}, vol.~36, no.~12, pp.
  1360--1372, 2013.

\bibitem{wang2014optimal}
X.~Wang and R.~Zhang, ``Optimal transmission policies for energy harvesting
  node with non-ideal circuit power,'' in \emph{Proc. IEEE SECON}, Singapore,
  June 2014.

\bibitem{ulukus2015energy}
S.~Ulukus, A.~Yener, E.~Erkip, O.~Simeone, M.~Zorzi, P.~Grover, and K.~Huang,
  ``Energy harvesting wireless communications: A review of recent advances,''
  \emph{IEEE J. Sel. Areas Commun.}, vol.~33, no.~3, pp. 360--381, 2015.

\bibitem{luo2012training}
Y.~Luo, J.~Zhang, and K.~B. Letaief, ``Training optimization for energy
  harvesting communication systems,'' in \emph{Proc. IEEE GLOBECOM}, Anaheim,
  USA, Dec. 2012.

\bibitem{huang2012throughput}
C.~Huang, R.~Zhang, and S.~Cui, ``Throughput maximization for the {Gaussian}
  relay channel with energy harvesting constraints,'' \emph{IEEE J. Sel. Areas
  Commun.}, vol.~31, no.~8, pp. 1469--1479, 2012.

\bibitem{gregori2016wireless}
M.~Gregori, J.~G{\'o}mez-Vilardeb{\'o}, J.~Matamoros, and D.~G{\"u}nd{\"u}z,
  ``Wireless content caching for small cell and {D2D} networks,'' \emph{IEEE J.
  Sel. Areas Commun.}, vol.~34, no.~5, pp. 1222--1234, 2016.

\bibitem{you2018exploiting}
C.~You and K.~Huang, ``Exploiting non-causal {CPU}-state information for
  energy-efficient mobile cooperative computing,'' \emph{IEEE Trans. Wireless
  Commun.}, vol.~17, no.~6, pp. 4104--4117, 2018.

\bibitem{li2021data}
X.~Li, S.~Wang, G.~Zhu, Z.~Zhou, K.~Huang, and Y.~Gong, ``Data partition and
  rate control for learning and energy efficient edge intelligence,''
  \emph{arXiv preprint arXiv:2107.08884}, 2021.

\bibitem{lane2010survey}
N.~D. Lane, E.~Miluzzo, H.~Lu, D.~Peebles, T.~Choudhury, and A.~T. Campbell,
  ``A survey of mobile phone sensing,'' \emph{IEEE Commun. Mag.}, vol.~48,
  no.~9, pp. 140--150, 2010.

\bibitem{khan2012mobile}
W.~Z. Khan, Y.~Xiang, M.~Y. Aalsalem, and Q.~Arshad, ``Mobile phone sensing
  systems: {A} survey,'' \emph{IEEE Commun. Surveys Tuts.}, vol.~15, no.~1, pp.
  402--427, 2012.

\bibitem{chandrakasan1992low}
A.~P. Chandrakasan, S.~Sheng, and R.~W. Brodersen, ``Low-power {CMOS} digital
  design,'' \emph{IEICE Trans. Electron.}, vol.~75, no.~4, pp. 371--382, 1992.

\bibitem{you2016energy}
C.~You, K.~Huang, and H.~Chae, ``Energy efficient mobile cloud computing
  powered by wireless energy transfer,'' \emph{IEEE J. Sel. Areas Commun.},
  vol.~34, no.~5, pp. 1757--1771, 2016.

\bibitem{goldsmith2005wireless}
A.~Goldsmith, \emph{Wireless communications}. Cambridge university press, 2005.

\bibitem{schwarz2013lte}
O.~W. Schwarz and R.~Minihold, ``{LTE} system specifications and their impact
  on {RF} \& base band circuits,'' \emph{Rohde \& Schwarz App Note}, pp. 1 --
  37, 2013.
  
\end{thebibliography}

\end{document}